\documentclass{article}


\pdfpagewidth=8.5truein
\pdfpageheight=11truein

\usepackage{enumerate}
\usepackage{hyperref}
\usepackage{color}

\usepackage{times}
\usepackage{helvet}
\usepackage{courier}
\usepackage{amsmath}
\usepackage{amssymb}
\usepackage{amsfonts}
\usepackage{amsthm}
\usepackage{mathrsfs}
\usepackage{xspace}
\sloppy

\usepackage{tikz}
\usetikzlibrary{arrows,automata,calc,matrix}

\usepackage[lined,plain,commentsnumbered]{algorithm2e}
\usepackage{verbatim}
\usepackage{enumitem}

\usepackage[lined,plain,commentsnumbered]{algorithm2e}

\newtheorem{definition}{Definition}
\newtheorem{proposition}{Proposition}
\newtheorem{theorem}{Theorem}
\newtheorem{thm}{Theorem}
\newtheorem{lemma}[thm]{Lemma}

\newtheorem{example}[thm]{Example}

\newcommand{\I}{\mathcal I}
\newcommand{\N}{\mathcal N}

\newcommand{\B}{\prof{B}}
\newcommand{\V}{P}
\newcommand{\vis}{V }

\newcommand{\Inf}{\mathit{Inf}}

\newcommand{\prof}[1]{\text{\boldmath $#1$}}

\newcommand{\hide}[1]{ \mathsf{hide} (#1) }
\newcommand{\reveal}[1]{ \mathsf{reveal}(#1) }

\newcommand{\nextop}[1]{ \mathsf{X} #1 }
\newcommand{\until}[2]{ #1 \mathsf{U} #2}
\newcommand{\know}[2]{ \mathsf{K}_{#1} #2}

\newcommand{\henceforth}[1]{ \mathsf{G} #1 }
\newcommand{\eventually}[1]{ \mathsf{F} #1 }

\newcommand{\bel}[2]{ \mathsf{op}( #1,#2 ) }
\newcommand{\visatom}[2]{ \mathsf{vis}( #1,#2 ) }
\newcommand{\pubatom}[2]{ \mathsf{pub}( #1,#2 ) }

\newcommand{\skipact}{ \mathsf{skip}}

\newcommand{\actions}{ \mathcal A}
\newcommand{\jactions}{ \mathcal{J}}

\newcommand{\lang}{ \mathcal{L}_{\logic}}
\newcommand{\logic}{ \mathsf{ELTL{\text{--}I}} }
\newcommand{\logicgoal}{ \mathsf{LTL{\text{--}I} }}
\newcommand{\langminus}{ \mathcal{L}_{\logicminus}}
\newcommand{\logicminus}{ \mathsf{LTL{\text{--}I}} }

\newcommand{\succfunct }{ \mathit{succ}}

\newcommand{\stateset}{ \mathcal{S}}

\newcommand{\histset}{ \mathcal{H}}

\newcommand{\bnf}{::=}
\newcommand{\infgame}{ \mathit{IG}}

\newcommand{\opset}{ \mathcal{B}}
\newcommand{\stratset}{ \mathcal{Q}}
\newcommand{\pubopset}{ \mathcal{P}}
\newcommand{\strat}{ \mathit{Q}}

\newcommand{\redu}{ \text{red}}

\begin{document}


\title{Strategic disclosure of opinions\\ on a social network}

\author{Umberto Grandi, Emiliano Lorini and Laurent Perrussel\thanks{This work was partially supported by Labex CIMI (ANR-11-LABX-0040-CIMI), within the program ANR-11-IDEX-0002-02.}}

\date{IRIT and CNRS\\
University of Toulouse\\
France}

\maketitle

\begin{abstract}

\noindent
We study the strategic aspects of social influence in a society
of agents linked by a trust network, introducing a new class of games called games of influence. A game of influence is an infinite repeated game with incomplete information in which, at each stage of interaction, an agent can make
her opinions visible (public) or invisible (private) in order to influence other agents' opinions. The influence process is mediated by a trust network,
as we assume that the opinion of a given agent is only affected by the opinions of those agents that she considers trustworthy (i.e., the agents in the trust network that are directly linked to her). Each agent is endowed with a goal, expressed in a suitable temporal language inspired from linear temporal logic ($\mathsf{LTL}$). We show that games of influence provide a simple abstraction to
explore the effects of the trust network structure on the agents' behaviour, by considering solution concepts from game-theory such as Nash equilibrium, weak dominance and winning strategies.

\end{abstract}

\section{Introduction}\label{sec:introduction}

At the micro-level, social influence can be conceived as a process where an agent forms her opinion on the basis of the opinions expressed by other agents in the society. Social influence depends on trust since an agent can be influenced by another agent, so that her opinions are affected by the expressed opinions of the other, only if she trusts her.
At the macro-level, social influence is the basic mechanism driving the diffusion of opinions in human societies: certain agents in the society influence other agents in the society towards a given view, and these agents, in turn, influence other agents to acquire the same view, and so on. In other words, social influence can be seen as the driving force of opinion diffusion in human and human-like agent societies.
This view is resonant of existing studies in social sciences and social psychology which emphasize the role of interpersonal processes in how people construe and form their perceptions, judgments, and impressions (see, e.g., \cite{Asch,Festinger,Moscovici}).

Recent work in multi-agent systems \cite{SchwindEtAlAAAI2015, GrandiEtAlAAMAS2015} proposed a formal model of opinion diffusion that combined methods and techniques from social network analysis with methods and techniques from belief merging and judgment aggregation.
The two models 
aim at studying how opinions of agents on a given set of issues evolve over time due to the influence of other agents in the population.
The basic component of these models is the trust network, as it is assumed that the opinions of a certain agent are affected only by the opinions of the agents that she trusts (i.e., the agents in the trust network that are directly linked to her).
Specifically, the opinions of a certain agent at a given time are the result of aggregating the opinions of the trustworthy agents at the previous time.

In this work we build on these models to look at social influence from a strategic perspective. We do so by introducing a new class of games, called games of influence.
Specifically, a game of influence is an infinite repeated game with incomplete information in which, at each stage of interaction, an agent can make her opinions visible (public) or invisible (private) to the other agents. Incompleteness of information is determined by the fact that an agent has uncertainty about the private opinions of the other agents, as she cannot see them.
At each stage of the game, every agent is influenced by the \emph{public} opinions of the agents she trusts (i.e., her neighbors in the trust network) and changes her opinions on the basis of the aggregation criterion she uses.

Following the representation of agents' motivations given in \cite{GutierrezEtAlIC20015}, in a game of influence each agent is identified with the goal that she wants to achieve. This goal is represented by a formula of a variant of linear temporal logic ($\mathsf{LTL}$), in which we can express properties about agents' present and future opinions.
For example, an agent might have the achievement goal that at some point in the future there will be consensus about a certain proposition $p$ (i.e., either everybody has the opinion that $p$ is true or everybody has the opinion that $p$ is false), or the maintenance goal that two different agents will always have the same opinion about $p$.

Games of influence provide a simple abstraction to explore
the effects of the trust
network structure on the agents' behaviour.
We consider solution concepts
from game-theory such as Nash equilibrium,
weak dominance and winning strategies.
For instance,
in the context of
games of influence,
we can study
how
the relative position of an agent in the trust network determines her
influencing power,
that is,
her capacity to influence opinions
of other agents,
no matter what the others decide to do
(which corresponds to the concept
of uniform strategy).
Moreover, in games of influence one can study
how the structure of the trust network
determines existence of Nash equilibria,
depending on the form of the agents' goals.
For instance,
we will show
that
if the trust
network is fully connected
and every agent wants to reach a consensus about
a certain proposition $p$, then
there always exists a least one Nash equilibrium.

\subsection*{Related work and paper outline}

Apart from the above mentioned work on opinion diffusion via judgment aggregation \cite{GrandiEtAlAAMAS2015} and belief merging \cite{SchwindEtAlAAAI2015}, there is a vast interest in providing formal models of social influence.
The most relevant is probably the work of Gosh and Vel\'azquez-Quesada \cite{GoshQuesadaAAMAS2015}, which does not however consider strategic aspects in their preference update model.
The Facebook logic introduced by Seligman \emph{et al.} \cite{SeligmanEtAlTARK2013} is also relevant, and motivated our effort in Section~\ref{sec:games} to get rid of epistemic operators in the goal language.
The difference between private and public information is reminiscent of the work of Christoff and Hansen \cite{ChristoffHansenJAL2015,ChristoffEtAlJOLLI}, which also does not focus on strategic aspects.  A related problem to opinion diffusion is that of information cascades and knowledge diffusion, which has been given formal treatment in a logical settings \cite{RuanThielscherer2011,BaltagEtAl2013}.
Finally, our work is greatly indebted to the work of  \cite{GutierrezEtAlIC20015}, since an influence game can be considered as a variation of an iterated boolean game in which individuals do not have direct power on all the variables -- there can be several individuals influencing another one -- but concurrently participate in its change.
Finally, \cite{SeligmanIBG2015} recently presented an extension
of iterated boolean games with a social network structure in which agents choose actions depending on the actions of those in their neighbourhood.

The paper is organized as follows.
Section~\ref{section:definitions} presents the basic definition of private and public opinions, as well as our model of opinion diffusion.
In Section~\ref{sec:goals} we present our language for goals based on an epistemic version of LTL, and we show that both the model-checking problem remains in PSPACE (as for LTL), by showing a reduction of the epistemic operator.
Section~\ref{sec:games} introduces the definition of influence games, and presents the main results about the effects of the network structure on solution concepts such as Nash equilibria and winning strategy, and on the complexity of checking that a given profile of strategies is a Nash equilibrium.
Section~\ref{sec:conclusions} concludes the paper.

\section{Opinion diffusion }\label{section:definitions}%

In this section we present the model of opinion diffusion which is the starting point of our analysis.
We generalise the model of propositional opinion diffusion introduced in related work \cite{GrandiEtAlAAMAS2015} by separating private and public opinions, and adapting the notion of diffusion through aggregation to this more complex setting.

\subsection{Private and public opinions}

\noindent
Let $\I=\{p_1,\dots,p_m\}$ be a finite set of propositions or \emph{issues}
and let $\N=\{1,\dots,n\}$ be a finite set of individuals or \emph{agents}.
Agents have opinions about all issues in $\I$ in the form of a propositional evaluations, or,  equivalently, a vector of 0s and 1s:

\begin{definition}[private opinion]
The \emph{private opinion} of agent $i$ is a function
 $B_i :  \I \to   \{1,0\}$
where
$B_i(p) = 1$ and
$B_i( p) = 0$
express, respectively, the agent's opinion that $p$ is true
and the agent's opinion that $p$ is false.
\end{definition}

For every $J \subseteq \N$,
we denote with $\opset_J = \Pi_{i \in J} B_i$
the set of all tuples of
opinions of the agents in $J$.
Elements of $\opset_J $ are denoted by
$\B_J$. For notational convenience,
we write $\opset$
instead of $\opset_\N$, and
  $\opset_i$
instead of $\opset_{  \{i \}  }$.

Let $\prof{B}=(B_1,\dots,B_n)$ denote the profile composed by the individual opinion of each agent.
Propositional evaluations can be used to represent ballots in a multiple referendum,
expressions of preference over alternatives, or value judgements over correlated issues (see, e.g., \cite{ChristianEtAlRED2007,GrandiEndrissIJCAI2011}).
Depending on the application at hand, an integrity constraint can be introduced to model the propositional correlation among issues.
For the sake of simplicity in this paper we do not assume any correlation among the issues, but the setting can easily be adapted to this more general framework.

We also assume that each agent has the possibility of declaring or hiding
her private opinion on each of the issues. 

\begin{definition}[visibility function]
The \emph{visibility function} of agent~$i$ is a map
 $\vis_i: \I \to   \{1,0\}$
where
$\vis_i (p) = 1$ and
$\vis_i ( p) = 0$
express, respectively, the fact that
agent $i$'s opinion on $p$
is visible and the fact that
agent $i$'s opinion on $p$ is hidden.
\end{definition}

We denote with $\prof{\vis} = (\vis_1, \ldots, \vis_n)$
the profile composed of the agents' visibility functions.
By combining the private opinion
with the visibility function of an agent
we can build her public opinion as a
three-valued function on the set of issues.

\begin{definition}[public opinion]
Let $B_i$ be agent $i$'s opinion
and $\vis_i$ her visibility function.
The \emph{public opinion}
induced by $B_i$
and $\vis_i $
is a function
 $P_i :  \I \to   \{1,0,?\}$
such that
$$
P_i (p) =
\begin{cases}
B_i ( p) & \mbox{ if }  \vis_i(p) =1 \\
 ?  & \mbox{ if }  \vis_i(p) =0
\end{cases}
$$
\end{definition}

For every $J \subseteq \N$,
we denote with
$\pubopset_J = \Pi_{i \in J} \V_i$
the set of all tuples of
public
opinions of the agents in $J$.
Elements of
$\pubopset_J $
are denoted by
$\V_J$. For notational convenience,
we write $\pubopset$
instead of $\pubopset_\N$, and
  $\pubopset_i$
instead of $\pubopset_{  \{i \}  }$.

Once more, $\prof{P}=(P_1,\dots,P_n)$ denotes the profile
of public opinions of all the agents in $\N$.
$P_i$ is aimed at capturing the \emph{public} expression of $i$'s view about the issues in $\I$.
Observe that an agent can only hide or declare her opinion about a
given issue, but is not allowed to lie.
Relaxing this assumption would actually represent an interesting direction for future work.



\subsection{Information states}

The information contained in a profile of public opinions can
also be modelled using a state-based representation and an
indistinguishability relation, in line with the existing work on
interpreted systems (see, e.g., \cite{FaginEtAl1995}). States will form the building blocks
of our model of strategic reasoning in opinion dynamics.

\begin{definition}[state]
A \emph{state} is a tuple $  S = (\B, \prof{\vis})$
where $\B $
is a profile of private opinions and
$\prof{\vis}$
is a profile of visibility functions.
The set of all states
is denoted by $\stateset$.

\end{definition}

The following definition formalises the uncertainty between states induced by
the visibility functions.
The idea is that an agent cannot
distinguish between two states
if and only if both
the agent's individual opinion
and the other agents' public opinions
are the same according to the two states.

\begin{definition}[Indistinguishability]\label{def:indistingui}
Let $S, S' \in \stateset$
be two states.
We say that agent $i$
cannot distinguish between
them, denoted by $ S \sim_i S'$,
if and only if:
\begin{itemize}
\item $B_i = B_i'$,
\item $\prof{\vis} = \prof{\vis}'$ and
\item for all $ j \in \N \setminus \{i \}$
and for all $p \in \I$,
if  $ \vis_j(p) = 1$ then $B_j (p) = B_j' (p)$.
\end{itemize}
\end{definition}

Let
$S^{\sim_i} =  \{ S' \in  \stateset \mid S \sim_i S'  \}$
be the set of states that agent $i$ cannot distinguish
from $S$. Clearly $\sim_i$ is an equivalence relation.
In what follows we will often use public states to represent the equivalence class of a state. Observe however that this is a too coarse representation, since each of the agents knows her own belief.

\begin{example}\label{ex:equiv}
Let there be three agents $i,j$ and $k$, and one issue $p$.
Assume that agents $j$ and $k$ consider $p$ as true as private opinion while $i$ private opinion is $p=0$.
Suppose also that agents $i$ and $j$ make their opinion public while agent $k$ does not.
From the perspective of agent $i$, the two states $S_0 = ((0,1,0), (1,1, 0))$ and  $S_1 = ((0,1,1), (1,1, 0))$ are indistinguishable, representing the two possible opinions of agent $k$ who is hiding it. Figure \ref{fig:equiv} represents the perspective of agent $i$.
\end{example}

\begin{figure}[h]
\begin{center}
\begin{tikzpicture}
\node (s1) at (0,0) [circle,draw,font=\scriptsize,
                                         label={[align=left] $B_i(p) = 0$\\$B_j(p) = 1$},  label={below:$ B_k(p) = 0$}] {$S_0$};
\node (s2) at (3,0) [circle,draw,font=\scriptsize,
                                         label={[align=left]$B_i(p) = 0$\\$B_j(p) = 1$},  label={below: $B_k(p) = 1$}]{$S_1$};
\draw [-]  (s1) -- (s2) node[draw=none,fill=none,midway,below] {$i$} ;
\end{tikzpicture}
\end{center}
\caption{Indistinguishable states for  $i$ as $k$ hides $p$.}
\label{fig:equiv}
\end{figure}
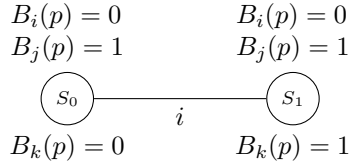

\subsection{Opinion diffusion
through aggregation}

\noindent
In this section we define the influence process that is at the heart of our model.
Our definition is a generalisation of the model by \cite{GrandiEtAlAAMAS2015}.

First, we assume that individuals are connected by an \emph{influence network}
which we model as a directed graph:

\begin{definition}[influence network]
An \emph{influence network}
is a directed irreflexive  graph $E\subseteq \N\times\N$.
We interpret $(i,j)\in E$ as ``agent $j$ is influenced by agent $i$''.
\end{definition}

We also refer to $E$ as the influence graph and to individuals in $\N$ as the nodes of the graph.
Let $\Inf(j) = \{i\in N \mid (i,j)\in E \}$ be the set of \emph{influencers} of agent $j$ in the network $E$.
Given a state $S$, this definition can be refined by considering the set
$\Inf^{S}(i,p) = \{j\in N \mid (j,i)\in E \text{ and } P_j(p)\not=\;?  \}$ to be the subset of influencers that are actually expressing their private opinion about issue $p$.
Clearly, $\Inf^{S}(i,p)\subseteq \Inf(i)$ for all $p$ and $S$.

\begin{example}
Figure \ref{fig:in} represents a basic influence network where some agent $i$ is influenced by two agents $j$ and $k$. The set $\Inf(i)=\{j,k\}$, and, using the notation in the previous example, the set $\Inf^{S_0}(i,p)=\Inf^{S_1}(i,p)=\{j\}$.

\begin{figure}[h]

\begin{center}
\begin{tikzpicture}
\node (ai) at (0,0) {$i$};
\node (aj) at (1,0) {$k$};
\node (ak) at (-1,0) {$j$};
\draw [->]  (aj) -- (ai) ;
\draw [->]  (ak) -- (ai) ;
\end{tikzpicture}
\end{center}
\caption{$i$ influences by $j$ and $k$.}
\label{fig:in}
\end{figure}

\end{example}

Given a profile of public opinions and 
an influence network $E$, we model the process of opinion diffusion by means of an aggregation function, which shapes the private opinion of an agent by taking into consideration
the public opinions of her influencers.


\begin{definition}[Aggregation procedure]\label{def:aggregation}
An \emph{aggregation procedure} for agent $i$ is a class of functions
\begin{align*}
 F_i: \opset \times \pubopset_J \longrightarrow \opset \text{ for each } J \subseteq \N \setminus \{    i \}
 \end{align*}
that maps agent $i$'s individual opinion and the public opinions of a set of agents $J$ to agent $i$'s individual opinion.
\end{definition}

Aggregation functions are used to construct the new private opinion of an agent in the dynamic process of opinion diffusion. Thus, $F_i(B_i,\prof{P}_{\Inf(i)})$ represents the private opinion of agent $i$ updated with the public opinions received by its influencers.

A number of aggregation procedures have been considered in the literature on judgment aggregation and can be adapted to our setting.
Notable examples are quota rules,
where an agent changes her opinion if the amount of people disagreeing with her is superior of a given quota (the majority rule is such an example).
These aggregation procedures give rise to the class of threshold models studied in the literature on opinion diffusion \cite{Granovetter1978,SchellingMicromotives}.

For the sake of simplicity in this paper we  consider that all agents use the following
aggregation procedure:

\begin{definition}\label{def:agg}
Let $S=(\B,\prof{V})$ be a state and $\prof{P}$ the corresponding profile of public opinions.
The \emph{unanimous issue-by-issue aggregation procedure} is defined as follows:


$$ F^U_i(B_i,\prof{P})(p) = \begin{cases}
B_i(p) & \mbox{ if } \Inf^{S}(i,p)=\emptyset\\
x\in \{0,1\} & \mbox{ if }  P_j(p)=x \;\;\forall j\in\Inf^{S}(i,p) \\
B_i(p) & \mbox{otherwise}
\end{cases}$$
\end{definition}

That is, an individual will change her private opinion about issue $p$ if and only if
all her influencers that are expressing their opinion publicly are unanimous in disagreeing with her own one.

\subsection{Strategic actions and state transitions}

Showing or hiding information is a key action in the model of opinion diffusion defined above. The dynamic of opinion is 
 rooted in two dimensions: the influence network and the visibility function. At each time step, by hiding or revealing their opinions, agents influence other agents opinions.
We assume that agents can make their opinions visible or invisible
by specific actions of type
$\reveal{p}$
(i.e.,
action of making  the opinion about $p$ visible)
and
$\hide{p}$
(i.e.,
action of hiding the opinion about $p$).
The action of doing nothing is denoted by $\skipact$.
Let therefore
\begin{align*}
\actions = & \{  \reveal{p} : p \in \I  \} \cup \{  \hide{p} : p \in \I  \}
 \cup \{  \skipact \}
\end{align*}
be the set of all individual actions
and
$\jactions = \actions^n$
the set of all joint actions.
Elements of $\jactions$
are denoted by $\prof{a}=(a_1,\dots,a_n)$.


Each joint action $\prof{a}$ induces a transition function between states.
This function is deterministic and is defined as follows:

\begin{definition}[transition function]
The \emph{transition function}
$\succfunct: \stateset \times \jactions \longrightarrow \stateset$
associates to each state $S$ and joint action $\prof{a}$ a new state
$S'= (\prof{B}',\prof{\vis}') $
where, for all $i \in \N$:
\begin{itemize}
\item $
 \vis'_i (p)=
 \begin{cases}
 1 & \text{ if }  a_i=  \reveal{p}    \text{ or } \\
 0 & \text{ if } a_i=  \hide{p}    \text{ or } \\
 V_i(p) & \text{ if } a_i=\skipact
\end{cases}
$

\item $B_i'  =   F_i^U(B_i, \prof{P}'_{\Inf(i)})$

\end{itemize}
\noindent
Where $\prof{P'}$ is the public profile obtained from private profile $\prof{B}$ and visibility functions~$\prof{V'}$.
\end{definition}

By a slight abuse of notation we denote with $\prof{a}(S)$ the state $\succfunct(S,\prof{a})$ obtained from $S$ and $\prof{a}$ by applying the transition function.
Observe that in our definition of transition function we are assuming that
the influence process occurs after that
the actions
have modified the visibility
of the agents' opinions. Specifically,
first, actions   have consequences
on the visibility
of the agents' opinions, then,
each agent modifies her private opinions
on the basis
of those opinions of other agents
that have become public.


We are now ready to define the concept of \emph{history}, describing the temporal aspect
of agents' opinion dynamic:

\begin{definition}[history]
Given a set of issues $\I$, a set of agents $\N$, and aggregation procedures $F_i$ over
a network $E$,  an \emph{history} is an infinite sequence of states
$H = (H_0, H_1, \ldots)$.
such that for all $t\in\mathbb N$ there exists a joint
action $\prof{a}_t\in \mathcal J$ such that $H_{t+1}=\prof{a}_t(H_n)$.
\end{definition}

Let $H = (H_0, H_1, \ldots)$ be an history.
For notational convenience,
for any $i \in \N$ and
for any $t \in \mathbb{N}$,
we denote with $H_{i,t}^B$ agent $i$'s private opinion in state $H_t$
and with $H_{i,t}^\vis$ agent $i$'s visibility function in state $H_t$.

The set of all histories is denoted by $\histset$. Observe that our definition restricts the
set of all possible histories to those that corresponds to a run of the influence dynamic described above.

\begin{example}
Let us reconsider the two previous examples,
with initial state $H_0 = S_0$.
Consider now the following joint actions $\prof{a}_0 = (\skipact, \skipact,  \reveal{p})$ and $\prof{a}_1 = (\skipact,  \hide{p}, \skipact)$: agent $k$ reveals her opinion, and at the next step $j$ hides her opinion about $p$.
If we assume that all individuals are using the unanimous aggregation procedure 
then Figure \ref{fig:history} shows the two states $H_1$ and $H_2$ constructed by applying the two joint actions from state $S_0$.
In state $H_1$, agent $i$'s private opinion about $p$ has changed, i.e., $H_{i,1}^B(p)=1$ as all her influencers are publicly unanimous about $p$.
At the next step, instead, no opinion is updated.
\end{example}

\begin{figure}[h]
\begin{center}
\begin{tikzpicture}
\node (h0) at (0,0) [font=\scriptsize,label={below:$H_0$}] {$((0,1,1), (1,1, 0))$} ;
\node (h1) at (3,0) [font=\scriptsize,label={below:$H_1$}] {$((1,1,1), (1,1, 1))$} ;
\node (h2) at (6,0) [font=\scriptsize,label={below:$H_2$}] {$((1,1,1), (1,0, 1))$} ;
\draw [->]  (h0) -- (h1) node[draw=none,fill=none,midway,below] {$\prof{a}_0$} ;
\draw [->]  (h1) -- (h2) node[draw=none,fill=none,midway,below] {$\prof{a}_1$} ;
\end{tikzpicture}
\end{center}
\caption{The initial two states of a history.}
\label{fig:history}
\end{figure}

\section{Temporal and Epistemic Goals}\label{sec:goals}     

As agents can hide or reveal their opinions, the strategic dimension of opinion diffusion is immediate: by revealing/hiding her opinion, an agent influences other agents.
Influence is actually guided by some underlying goals.
An agent reveals/hides her opinion only if she wants to influence some other agents: she aims at changing individual opinions.
Temporal dimension is immediate as the dynamics of opinion has to be considered. Following \cite{GutierrezEtAlIC20015},  we define a language to express individual epistemic goals about the state of the individual opinions.

\subsection{Epistemic temporal logic of influence}

Let us introduce a logical language based on
a combination of simple version of multi-agent epistemic logic and
linear temporal logic ($\mathsf{LTL}$)
that can be interpreted over histories. In line with our framework, term \emph{epistemic state} should be  interpreted as \emph{private opinion}. Goals in our perspective consists of targeting an epistemic state: typically `'agent $i$ wants that agent $j$ has private opinion $\varphi$ in the future''. The proposed language does not allow the temporal operator to be in the scope of a knowledge operator, obtaining a simpler language and a reduction that allows us to stay in the same complexity class as $\mathsf{LTL}$.

We call $\logic$ this logic, from epistemic linear temporal logic of influence.
Its language, denoted by $\lang$, is defined by the following
BNF:

\begin{center}\begin{tabular}{lcl}
 $\alpha$  & $\bnf$ & $\bel{i}{p}  \mid \visatom{i}{p} \mid \neg\alpha \mid \alpha_1 \wedge \alpha_2 \mid \know{i}{\alpha} $\\
 $\varphi$  & $\bnf$ & $\alpha  \mid \neg\varphi \mid \varphi_1 \wedge \varphi_2  \mid \nextop{\varphi} \mid \until{\varphi_1}{\varphi_2} $\
\end{tabular}\end{center}
where $i$ ranges over $\N$
and $p$
ranges over $\I$.
$\bel{i}{p}$ has to be read
``agent $i$'s opinion is that $p$
is true'' while 
$\neg \bel{i}{p}$ has to be read
``agent $i$'s opinion is that $p$
is not true'' (since we assume that agents have binary opinions).
$\visatom{i}{p}$ has to be read
``agent $i$'s opinion about $p$ is visible''.
Finally, $ \know{i}{\alpha}$
has to be read
``agent $i$ knows that $\alpha$ is true''.

$\nextop{\varphi}$
and $\until{ }{ } $
are the standard $\mathsf{LTL}$ operators
`next' and `until'.
In particular,
$ \nextop{\varphi}$
has to be read
``$\varphi$ is going to be true in the next state''
and
$\until{\varphi_1}{\varphi_2}$
has to be read
``$\varphi_1$ will be true until $\varphi_2$ is true''.
As usual,
we can define the temporal
operators
`henceforth' ($\henceforth{ }$) and `eventually'
($\eventually{ }$)
by means of the
`until' operator:
\begin{align*}
\henceforth{\varphi} & =_{\mathit{def}} \neg (\until{\top}{\neg \varphi})  \\
\eventually{\varphi}   & =_{\mathit{def}} \neg \henceforth{\neg \varphi }
\end{align*}

The interpretation
of $\lang$-formulas
relative to histories is  defined as follows.
\begin{definition}[Truth conditions]
Let $\varphi$ be
a $\lang$-formula, let $H  $ be a history and let $k \in \mathbb{N}$. Then:
\begin{eqnarray*}
H, k \models \bel{i}{p} & \Leftrightarrow & H_{i,k}^B (p) =1   \\
H, k \models \visatom{i}{p} & \Leftrightarrow & H_{i,k}^\vis (p) =1   \\
H, k \models \know{i}{\alpha} & \Leftrightarrow & \forall H' \in \histset : \text{ if } H(k) \sim_i H' (k)  \text{ then }\\
&& H', k \models \alpha\\
H, k \models \neg \varphi & \Leftrightarrow & H, k \not \models  \varphi   \\
H, k \models  \varphi_1 \wedge \varphi_2 & \Leftrightarrow & H, k   \models  \varphi_1 \text{ and }  H, k  \models  \varphi_2  \\
H, k \models  \nextop{\varphi} & \Leftrightarrow & H, k+1 \models  \varphi    \\
H, k \models  \until{\varphi_1}{\varphi_2}  & \Leftrightarrow & \exists k' \in \mathbb{N}  :   ( k \leq k' \text{ and }  H, k' \models  \varphi_2 \text{ and }  \\
& & \forall k'' \in \mathbb{N}: \text{ if } k \leq k'' < k' \text{ then } H, k''  \models  \varphi_1 )
\end{eqnarray*}

\end{definition}

The operator $\know{i}{}$ is rather peculiar, and should not be interpreted
as a classical individual epistemic operator. It mixes public and private opinions of our model.  Operator  $\know{i}{}$  reading is rather  "agent $i$ is uncertain about other agents private opinion as this opinion is not visible" and $\know{i}{\alpha}$ stands for agent $i$ knows $\alpha$ despite this uncertainty.

The following proposition shows that $\know{i}{}$ could also be formulated
in terms of equivalence between histories rather than in terms
of equivalence between states. Its proof is immediate from our definitions.

\begin{proposition}
Let $ H\in \histset $
and $ k \in \mathbb{N}$.
Then
\begin{eqnarray*}
H, k \models \know{i}{\alpha} &  \text{ iff } & \forall H' \in \histset : \text{ if } H \sim_i H'  \text{ then }
H', k \models \alpha
\end{eqnarray*}
where $H \sim_i H'$ iff $H(h) \sim_i H'(h)$
for all $h \in \mathbb{N}$.
\end{proposition}


\begin{example}
Consider Figure \ref{fig:history}, the following statement expresses that in state $H_0$, it is the case that in the future agent $k$ knows agent $i$ public opinion about $p$ (as $H_1^{\sim_k} = \{H_1\}$):
\[
H, 0 \models \eventually{(\know{k}{(\bel{i}{p}\land}\visatom{i}{p}))}
\]
\end{example}
This example also shows how each $\logic$ statements can be used for representing individual goals. Hence, gathering individual goals lead to the construction of a boolean game. Before detailing this aspect we conclude the section by exhibiting the key results about model checking 
for the $\logic$ logic.

\subsection{Model checking}\label{sec:reduction}

The aim of this section is to show that model checking for $\logic$ is as hard as model checking for $\mathsf{LTL}$.
Recall that epistemic temporal logic has very high complexity \cite{FaginEtAl1995}.
We do so by reducing formulas containing an epistemic modality to propositional ones.

\begin{lemma}\label{lemma:val1}
The following formulas are valid in $\logic$:
\begin{enumerate}[nolistsep,label=$(\roman*)$]
\item   $\know{i}{\bel{i}{p}}  \leftrightarrow  \bel{i}{p}$
\item   $\know{i}{\bel{j}{p}}  \leftrightarrow  (\bel{j}{p}  \wedge \visatom{j}{p} ) \text{ if }  i \neq j$
\item   $\know{i}{\visatom{j}{p}}  \leftrightarrow  \visatom{j}{p}$ for all $j$
\item   $\know{i}{\neg\bel{i}{p}}  \leftrightarrow  \neg\bel{i}{p}$
\item   $\know{i}{\neg\bel{j}{p}}  \leftrightarrow  (\neg\bel{j}{p}  \wedge \visatom{j}{p} ) \text{ if }  i \neq j$
\item   $\know{i}{\neg\visatom{j}{p}}  \leftrightarrow  \neg\visatom{j}{p}$ for all $j$

\end{enumerate}
\end{lemma}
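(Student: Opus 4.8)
The plan is to exploit that every formula appearing in the six biconditionals is built from the atoms $\bel{\cdot}{\cdot}$ and $\visatom{\cdot}{\cdot}$ using only Boolean connectives and a single $\know{i}{}$; none of them contains a temporal operator. Hence, by the truth conditions, the value of each side at $(H,k)$ depends only on the state $H(k)$, and the displayed clause for $\know{i}{}$ says that $H,k \models \know{i}{\beta}$ holds exactly when \emph{every} state $S'$ with $H(k) \sim_i S'$ satisfies $\beta$ (reading $\beta$ as a condition on states). So it suffices to fix $S = H(k)$, write out Definition~\ref{def:indistingui}, and compare the two sides on the $\sim_i$-class of $S$. Throughout, the $(\Rightarrow)$ direction will use reflexivity of $\sim_i$ (so $S$ itself is an accessible world), while $(\Leftarrow)$ will use that the relevant component is constant across the class.

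For the four transparent items $(i)$, $(iii)$, $(iv)$, $(vi)$ the argument is immediate from the first two bullets of Definition~\ref{def:indistingui}: every $S' \sim_i S$ has $B_i' = B_i$ and $\prof{\vis}' = \prof{\vis}$. Consequently the atoms $\bel{i}{p}$ and $\visatom{j}{p}$ (for \emph{any} $j$), together with their negations, take the same value at $S$ and at every accessible $S'$; so $\know{i}{\beta} \leftrightarrow \beta$ for each such $\beta$, which is exactly what these four items assert.

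The content is in the cross-agent items $(ii)$ and $(v)$, where $i \neq j$; I treat $(ii)$, the negated version being symmetric. I would split on $\visatom{j}{p}$. If $\vis_j(p) = 1$, the third bullet of Definition~\ref{def:indistingui} forces $B_j'(p) = B_j(p)$ on the whole class, so $\know{i}{\bel{j}{p}} \leftrightarrow \bel{j}{p}$; since $\visatom{j}{p}$ holds, this coincides with $\bel{j}{p} \wedge \visatom{j}{p}$. For $(\Leftarrow)$ in general, assuming $\bel{j}{p} \wedge \visatom{j}{p}$, the second bullet keeps the visibility $\vis_j(p) = 1$ and then the third bullet gives $B_j'(p) = 1$ at every accessible $S'$, i.e.\ $\know{i}{\bel{j}{p}}$. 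It remains to prove $(\Rightarrow)$ when $\vis_j(p) = 0$: here $\visatom{j}{p}$ is false, so I must show $\know{i}{\bel{j}{p}}$ fails, i.e.\ exhibit an accessible world in which $\neg\bel{j}{p}$. If already $B_j(p) = 0$ the reflexive witness $S$ suffices; otherwise I would flip $B_j(p)$ to $0$, obtaining a state $T$ that is still $\sim_i S$ because the altered component is hidden ($\vis_j(p) = 0$) and hence invisible to all three bullets.

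The step I expect to be the main obstacle is precisely the legitimacy of this witness: the accessibility clause quantifies over $H' \in \histset$, so $T$ must be realised as $H'(k)$ for some genuine history. This is immediate for $k = 0$, since initial states are unconstrained; for $k \geq 1$ the cleanest route is to observe that any state consistent with the unanimous rule of Definition~\ref{def:agg} is a fixed point of the all-$\skipact$ joint action, so the stationary sequence $(T, T, \dots)$ is a history with $H'(k) = T$. The delicate sub-case is when $j$'s visible influencers are already unanimous and thereby pin down $B_j(p)$ through $F^U_i$: there the single-issue flip $T$ need not be dynamically consistent, and one must argue that $i$'s accessible worlds are exactly the states compatible with her local observation (her own opinion together with all visible opinions), independently of global reachability. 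I would make this reading explicit and show the witness lies in that set, after which $(ii)$ and $(v)$ close and the lemma follows.
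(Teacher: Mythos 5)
Your handling of $(i)$, $(iii)$, $(iv)$, $(vi)$ and of the direction from $\bel{j}{p}\wedge\visatom{j}{p}$ to $\know{i}{\bel{j}{p}}$ coincides with the paper's (very terse) argument: reflexivity of $\sim_i$ together with the first two clauses of Definition~\ref{def:indistingui} settles the transparent items, and the third clause settles the implication just mentioned. For the remaining direction the paper argues exactly as you do --- if $\visatom{j}{p}$ is false, flip $B_j(p)$ to obtain an indistinguishable state falsifying $\bel{j}{p}$ --- but, unlike you, it never asks whether this witness can be realised as $H'(k)$ for an actual $H'\in\histset$. So the obstacle you isolate at the end is not an artefact of your write-up; it is a genuine gap, and it is present in the paper's own proof as well.

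Neither of your proposed repairs closes it under the semantics as literally defined. The stationary-sequence trick fails for precisely the reason you give: a state in which a hidden opinion disagrees with unanimous visible influencers is not a fixed point of the all-$\skipact$ action, and indeed need not lie in the image of the transition function at all. Concretely, take $\N=\{i,j,l\}$, one issue $p$, the single edge $(l,j)\in E$, and keep $B_l(p)=1$, $V_l(p)=1$, $V_j(p)=0$ throughout. Since $l$ is a source, her opinion never changes, so at every time $k\ge 1$ the aggregation $F^U_j$ forces $B_j(p)=1$ in every realisable state; the same holds in every state that is both $\sim_i$-indistinguishable from $H(k)$ and realisable at time $k$ (the class fixes $V_l(p)=1$ and $B_l(p)=1$). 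Hence $H,k\models\know{i}{\bel{j}{p}}$ while $H,k\not\models\visatom{j}{p}$, refuting $(ii)$ at $k\ge 1$. Your second repair --- reading $i$'s accessible worlds as all states compatible with her local observation ``independently of global reachability'' --- does make the lemma true, but it is an amendment of the truth condition rather than a consequence of it: the clause for $\know{i}{}$ explicitly quantifies over $H'\in\histset$, and $\histset$ is explicitly restricted to runs of the dynamics. The honest conclusion is that the lemma as written holds unconditionally only at $k=0$ (or under the amended semantics you sketch); both your attempt and the paper's stop one step short of a proof of the stated claim, and you are the one who noticed where.
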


\begin{proof}[sketch]
Straightforward from Definition~\ref{def:indistingui} and the interpretation of the $\know{i}{p}$ operator. To show the right-to-left direction of $(ii)$ and $(v)$, suppose that $\bel{j}{p}$ is true at every indistinguishable state for $\sim_i$, i.e., that $\know{i}{\bel{j}{p}}$ is true. This 
implies that $\bel{j}{p}$ is true in the current state. Moreover, if $\visatom{j}{p}$ is false, then by Definition~\ref{def:indistingui} there would be an indistinguishable state in which $\bel{j}{p}$ is false, contradicting the hypothesis.
\end{proof}

\noindent
We now show two distribution laws for the operator $\know{i}{}$ with respect to conjunction and negation:

\begin{lemma}\label{lemma:val2}
The following formulas are valid in $\logic$:
\begin{enumerate}[nolistsep,label=$(\roman*)$]
\item $\know{i}{(\alpha_1\wedge\alpha_2)}\leftrightarrow ( \know{i}{\alpha_1} \wedge \know{i}{\alpha_2})$ for all $\alpha_1$ and $\alpha_2$;
\item $\know{i}{(\alpha_1\vee \alpha_2)}\leftrightarrow  (\know{i}{\alpha_1} \vee \know{i}{\alpha_2})$ when $\alpha_1$ and $\alpha_2$ do not contain any occurrence of the modality $\know{j}{}$ for any $j$.
\end{enumerate}
\end{lemma}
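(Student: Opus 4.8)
The plan is to prove each of the two biconditionals by appealing to the semantics of $\know{i}{}$ together with the reduction already established in Lemma~\ref{lemma:val1}. For part $(i)$, the conjunction case, I would argue directly from the truth condition for $\know{i}{}$ in terms of equivalence classes of indistinguishable states. The left-to-right direction is immediate: if $\know{i}{(\alpha_1 \wedge \alpha_2)}$ holds at $(H,k)$, then $\alpha_1 \wedge \alpha_2$ holds at every $H'$ with $H(k) \sim_i H'(k)$, so in particular each conjunct holds at every such $H'$, giving $\know{i}{\alpha_1}$ and $\know{i}{\alpha_2}$. The right-to-left direction is equally routine: if both $\alpha_1$ and $\alpha_2$ hold at every indistinguishable state, then their conjunction does too. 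This direction uses nothing beyond the universal quantifier in the semantics distributing over $\wedge$, so it holds for arbitrary $\alpha_1,\alpha_2$ and requires no syntactic restriction.

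For part $(ii)$, the disjunction case, the right-to-left direction is again trivial and unconditional: if $\know{i}{\alpha_1}$ holds then $\alpha_1$, hence $\alpha_1 \vee \alpha_2$, holds at every indistinguishable state, so $\know{i}{(\alpha_1 \vee \alpha_2)}$ follows (symmetrically from $\know{i}{\alpha_2}$). The delicate direction is left-to-right, and this is where the syntactic restriction that $\alpha_1,\alpha_2$ contain no nested $\know{j}{}$ modality becomes essential: in general knowledge does not distribute over disjunction. My approach here is to exploit the reduction of Lemma~\ref{lemma:val1}: since $\alpha_1$ and $\alpha_2$ are $\know{}{}$-free Boolean combinations of the atoms $\bel{i}{p}$ and $\visatom{i}{p}$, I can use Lemma~\ref{lemma:val1} to rewrite each $\know{i}{\alpha_\ell}$ as a formula in which the modality has been pushed all the way inside onto literals. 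The key structural fact I would establish is that, under the indistinguishability relation $\sim_i$, the atoms that agent $i$ cannot resolve are exactly the values $\bel{j}{p}$ for $j \neq i$ on issues where $\visatom{j}{p}$ is false; every other atom is fixed across the equivalence class.

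The cleanest way to organise the hard direction is to split each atom occurring in $\alpha_1 \vee \alpha_2$ into those that are \emph{determined} by $i$'s information at state $H(k)$ (namely $\bel{i}{p}$, all $\visatom{j}{p}$, and $\bel{j}{p}$ whenever $\vis_j(p)=1$) and those that are \emph{undetermined} (the $\bel{j}{p}$ with $j \neq i$ and $\vis_j(p)=0$). By Definition~\ref{def:indistingui}, the indistinguishable states $H'(k)$ range precisely over all independent reassignments of the undetermined atoms while the determined ones stay fixed. Thus $\know{i}{(\alpha_1 \vee \alpha_2)}$ says that $\alpha_1 \vee \alpha_2$ evaluates to true under \emph{every} such reassignment. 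I would then argue that, for a $\know{}{}$-free formula, this amounts to the disjunction being true under the determined atoms together with quantification over the free (undetermined) atoms, and show that if neither $\know{i}{\alpha_1}$ nor $\know{i}{\alpha_2}$ held there would be a reassignment falsifying $\alpha_1$ and a possibly different one falsifying $\alpha_2$, and then one must produce a single reassignment falsifying both to contradict $\know{i}{(\alpha_1 \vee \alpha_2)}$.

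The main obstacle is exactly this last point: producing a single falsifying reassignment. In full propositional logic one cannot in general merge a witness for $\neg\alpha_1$ and a witness for $\neg\alpha_2$ into one assignment, which is why distribution of knowledge over disjunction fails in general. I expect the restriction to $\know{}{}$-free $\alpha_\ell$ to be \emph{necessary but not by itself sufficient} at the purely propositional level, so the real content must come from a special feature of the influence setting — most plausibly that the undetermined atoms for distinct agents $j$ are \emph{independent} coordinates, and that the aggregation dynamics are irrelevant to this single-state evaluation. I would therefore take care to verify whether the statement as written needs an additional hypothesis (for instance that $\alpha_1$ and $\alpha_2$ each depend on disjoint sets of undetermined atoms, or that they are literals rather than arbitrary Boolean combinations), since a naive $\alpha_1 = \bel{j}{p}$, $\alpha_2 = \neg\bel{j}{p}$ with $\vis_j(p)=0$ makes $\alpha_1 \vee \alpha_2$ a tautology, hence known, while neither disjunct is known. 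Reconciling this apparent counterexample with the claimed validity is the crux, and I would resolve it by pinning down precisely which class of $\alpha_\ell$ the intended lemma covers before committing to the merging argument.
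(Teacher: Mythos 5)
Your part $(i)$ and the right-to-left direction of $(ii)$ are correct and coincide with the paper's argument: both are immediate from the universal quantification over the $\sim_i$-class, with no syntactic restriction needed. The substantive issue is the left-to-right direction of $(ii)$, where you stop short of a proof and flag an ``apparent counterexample''. That hesitation is warranted, and you should follow it to its conclusion: the instance $\alpha_1 = \bel{j}{p}$, $\alpha_2 = \neg\bel{j}{p}$ with $j \neq i$, evaluated at time $0$ of a history whose initial state has $\vis_j(p)=0$, genuinely falsifies the lemma as stated. There $\alpha_1 \vee \alpha_2$ is a tautology, so $\know{i}{(\alpha_1\vee\alpha_2)}$ holds; but by Lemma~\ref{lemma:val1}$(ii)$ and $(v)$ the right-hand side $\know{i}{\alpha_1}\vee\know{i}{\alpha_2}$ is equivalent to $\visatom{j}{p}$, which is false at that state. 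Both disjuncts are free of $\know{j}{}$, so the stated side condition does not exclude this case, and since every state is the initial state of some history, the full $\sim_i$-class is realized at $k=0$ and the semantics gives no escape.

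The paper's own proof sketch fails at exactly the point you isolate as the crux. In the case where $\alpha_1$ and $\alpha_2$ contain only atoms $\bel{j}{p}$ with $j\neq i$, it asserts that if none of these atoms is visible then ``there exists a state $S'\in S^{\sim_i}$ such that neither $\alpha_1$ or $\alpha_2$ hold'' --- i.e., it silently merges a falsifying reassignment for $\alpha_1$ with one for $\alpha_2$ into a single indistinguishable state. As you correctly observe, that merge is impossible in general when the two disjuncts share undetermined atoms, and the tautologous disjunction above is the minimal witness. The equivalence does hold under stronger hypotheses --- e.g., when $\alpha_1$ and $\alpha_2$ depend on disjoint sets of atoms that are undetermined at the state in question, or when each $\alpha_\ell$ is a single literal, which is what Lemma~\ref{lemma:val1} and the subsequent reduction of Algorithm~\ref{algo:redproc} actually require --- and your suggestion to pin down such a restriction is the correct repair. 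So: you have not closed the proof, but the obstruction lies in the statement, not in your reasoning; do not discard the counterexample, make it explicit.
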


\begin{proof}[sketch]
$(i)$ and the right-to-left directions of $(ii)$ are standard consequences of interpreting $\know{i}{p}$ over equivalence relations. We now prove the left-to-right direction of $(ii)$ by induction on the construction of a propositional formula.

Assume some history $H$ and state $S$ and suppose that the left part holds.
This means that in all states $\in S^{\sim_i}$, either $\alpha_1$ or $\alpha_2$ is true.
If both statements contains $\visatom{j}{p}$ or  $\neg\visatom{j}{p}$ then  Definition \ref{def:indistingui} entails that right-to-left direction hold, since either $\alpha_1$  holds in all  states $\in S^{\sim_i}$ or $\alpha_2$ holds in all  states $\in S^{\sim_i}$.
If $\alpha_1$ (respectively $\alpha_2$) contains some statement $\bel{i}{p}$, then the right-to-left direction holds as $\alpha_1$  either holds in all states $\in S^{\sim_i}$ or is false in all states.
If it does not hold then $\alpha_2$ is considered in a similar way to $\alpha_1$.
Now suppose that $\alpha_1$ and  $\alpha_2$  only contain  statements of the form $\bel{j}{p}$ s.t. it is always the case that $j\ne i$.
Then there must exist $j, p$ such that $\visatom{j}{p}= 1$. Otherwise, there exists a state $S'\in  S^{\sim_i}$ such that neither $\alpha_1$ or $\alpha_2$ hold (as all possible indistinguishable states must be considered).
In conclusion, the right-to-left direction holds as either $\alpha_1$  (respectively $\alpha_2$) either holds in all states $\in S^{\sim_i}$ or it does not hold in all states.
\end{proof}

\noindent
Finally, we reduce the nesting of the $\know{i}{p}$ operator:

\begin{lemma}\label{lemma:depth}
The following formulas are valid in $\logic$:
\begin{enumerate}[nolistsep,label=$(\roman*)$]
\item $\know{i}{\know{i}{\alpha}} \leftrightarrow \know{i}{\alpha}$
\item $\know{i}{\know{j}{\know{i}{\alpha}}} \leftrightarrow \know{j}{\know{i}{\alpha}$} for all $i\not= j$, when $\alpha$ do not contain any occurrence of the modality $\know{j}{}$ for any $j$.
\end{enumerate}
\end{lemma}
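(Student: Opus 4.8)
The plan is to reduce both validities to purely relational facts about the indistinguishability relations $\sim_i$ on states, exploiting that $\alpha$ introduces no quantification over histories beyond what the truth of $\know{i}{}$ already carries. Since a formula without any occurrence of $\know{\ell}{}$ has truth at $(H,k)$ depending only on the state $H(k)$, and since (by the Proposition above) $\know{i}{}$ quantifies exactly over the $\sim_i$-class of the current state, I would work throughout at the level of states $S\in\stateset$ and the relations $\sim_i$, and only at the very end relate this back to histories.

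For part $(i)$ I would simply invoke that $\sim_i$ is an equivalence relation, as already noted after Definition~\ref{def:indistingui}. The left-to-right direction follows from reflexivity, taking $S'=S$ as the witnessing $\sim_i$-successor; the right-to-left direction follows from transitivity, since if $\alpha$ holds at every $S'\sim_i S$ then for any $S'\sim_i S$ and any $S''\sim_i S'$ we have $S\sim_i S''$, hence $\alpha$ holds at $S''$, i.e.\ $\know{i}{\alpha}$ holds at $S'$. This is the standard positive-introspection argument and needs no side condition, which is exactly why $(i)$ is stated without restriction on $\alpha$.

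The heart of part $(ii)$ is a saturation lemma about composing two \emph{distinct} relations. I would first record the shape of a $\sim_i$-class: by Definition~\ref{def:indistingui}, along any $\sim_i$-step the whole visibility profile $\prof{\vis}$ and all visible opinions are frozen, and the only coordinates that may change are the hidden opinions $B_\ell(p)$ of agents $\ell\neq i$. Writing $\Phi_i=\{(\ell,p):\ell\neq i,\ \vis_\ell(p)=0\}$ for these ``$i$-flexible'' coordinates, the set of states reachable from $S$ by a $\sim_j$-step followed by a $\sim_i$-step is precisely the set of states agreeing with $S$ outside $\Phi_i\cup\Phi_j$ and arbitrary on $\Phi_i\cup\Phi_j$. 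The crucial point is that for $i\neq j$ no coordinate is owned by both agents, so $\Phi_i\cup\Phi_j=\{(\ell,p):\vis_\ell(p)=0\}$ is the set of \emph{all} hidden coordinates; hence $\know{j}{\know{i}{\alpha}}$ holds at $S$ if and only if $\alpha$ holds at every state in the set $\mathcal{M}(S)$ of states having the same visibility profile and the same visible opinions as $S$ (here I use that $\alpha$ is modality-free, so its truth depends only on the state). This is the step I expect to be the main obstacle: verifying that the two composed steps genuinely saturate all hidden coordinates needs the disjointness $i\neq j$, and the whole reduction to ``truth on $\mathcal{M}(S)$'' collapses as soon as $\alpha$ may contain a $\know{\ell}{}$, since then truth at a state depends again on the surrounding relational structure rather than on the state alone — which is precisely the side condition imposed in the statement.

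To conclude $(ii)$ I would observe that the predicate $\Psi(S)$ asserting ``$\alpha$ holds on all of $\mathcal{M}(S)$'' is itself $\sim_i$-invariant: if $S'\sim_i S$ then $S'$ has the same visibility profile and the same visible opinions as $S$ (agent $i$'s opinions are fully frozen and the other agents' visible opinions are frozen), so $\mathcal{M}(S')=\mathcal{M}(S)$ and $\Psi(S')=\Psi(S)$. Therefore $\know{i}{\know{j}{\know{i}{\alpha}}}$ at $S$, which unfolds to ``$\Psi(S')$ for every $S'\sim_i S$'', is equivalent to $\Psi(S)$, i.e.\ to $\know{j}{\know{i}{\alpha}}$ at $S$, yielding both directions of the biconditional simultaneously. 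A final routine point is that every state of the relevant class is realised as $H'(k)$ for some history $H'$, so that the state-level quantification faithfully matches the history-level semantics of $\know{i}{}$; this is guaranteed by the Proposition above and causes no difficulty.
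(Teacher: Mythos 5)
Your proof is correct, but it takes a genuinely different route from the paper's for part $(ii)$. The paper argues syntactically: it puts $\alpha$ in CNF, distributes $\know{i}{}$ over $\wedge$ and $\vee$ using Lemma~\ref{lemma:val2}, and then verifies the identity $\know{i}{\know{j}{\know{i}{\ell}}} \leftrightarrow \know{j}{\know{i}{\ell}}$ literal by literal, using the reductions of Lemma~\ref{lemma:val1} (e.g.\ $\know{i}{\ell}$ collapses to $\ell \wedge \visatom{k}{p}$, and a second application of a modality adds only a redundant copy of $\visatom{k}{p}$). You instead compute the composed accessibility relation directly: the observation that $\Phi_i \cup \Phi_j$ exhausts \emph{all} hidden coordinates exactly when $i \neq j$, so that $\know{j}{\know{i}{\alpha}}$ quantifies over the full ``same visibility, same visible opinions'' class $\mathcal{M}(S)$, and that membership of this class is $\sim_i$-invariant, is a clean one-shot semantic argument. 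It makes transparent precisely where $i \neq j$ and the modality-freeness of $\alpha$ are used, and it avoids relying on the disjunction-distribution law of Lemma~\ref{lemma:val2}$(ii)$, whose side condition the paper's sketch has to respect at each level of nesting. What the paper's syntactic route buys in exchange is an effective rewriting procedure with an explicit size bound, which is what Algorithm~\ref{algo:redproc} and Lemma~\ref{redprocedure} need downstream; your argument establishes validity but does not by itself yield the reduced formula. One shared imprecision: both arguments tacitly assume that every state in the relevant $\sim$-class is realized at time $k$ by some history in $\histset$ (histories are constrained by the transition function, so this is not automatic for $k \geq 1$); your appeal to the Proposition on history-level equivalence does not actually discharge this, but the paper's own proofs of Lemmas~\ref{lemma:val1}--\ref{lemma:depth} make the same assumption.
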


\begin{proof}[sketch]
The proof of $(i)$ is a standard consequence of using equivalence relations.
To prove $(ii)$, let $\alpha$ be a propositional formula.
Put first $\alpha$ in CNF and then distribute by Lemma~\ref{lemma:val2} the modalities over conjunction.
We now prove that $\know{i}{\know{j}{\know{i}{\ell}}} \leftrightarrow \know{j}{\know{i}{\ell}}$ where $\ell$ is a literal. If $\ell=(\neg)\visatom{j}{p}$, then by Lemma~\ref{lemma:val1} both sides of the equivalence reduce to $(\neg)\visatom{j}{p}$.
Suppose that $\ell=(\neg)\bel{k}{p}$ for some $k\in\N$.
If $k\not=i,j$ then by Lemma~\ref{lemma:val1} we can reduce $\know{i}{\ell}$ to $\ell\wedge \visatom{k}{p}$, and then it is straightforward to conclude by observing that $\know{i}{\ell\wedge \visatom{k}{p}}\leftrightarrow \ell\wedge \visatom{k}{p} \wedge \visatom{k}{p}$ which in turn is equivalent to $\ell\wedge \visatom{k}{p}$.
The case of $k=i$ and $k=j$ is similar.
\end{proof}

\noindent
We are now ready to present an algorithm to translate a formula of $\logic$ into an equivalent one
without any occurrence of the $\know{i}{}$ operator. Let an epistemic literal be a formula of the form $(\neg)\know{i}{\ell}$ for $i\in\N$ and propositional literal $\ell$.


\begin{algorithm}

 \KwIn{a formula $\varphi\in \lang$ }
 \KwOut{a formula $\redu(\varphi)\in \lang$ with no epistemic operators   }

\smallskip

	\While{there is an epistemic operator in $\varphi$ outside an epistemic literal}{
1. choose a subformula $\know{i}{\alpha}$ of $\varphi$
   such that $\alpha$ is without epistemic operators and is not a propositional literal\;
2. put $\alpha$ in negated normal form (NNF)\;
3. distribute $\know{i}{}$ over $\wedge$ and $\vee$ \;
}
4. Reduce the depth modalities with Lemma~\ref{lemma:depth}\;
5. Reduce the atoms with Lemma~\ref{lemma:val1} \;
\medskip
\caption{Reduction of the epistemic operator}
\label{algo:redproc}
 \end{algorithm}

The following proposition guarantees that the translation defined is also polynomial.

\begin{lemma}\label{redprocedure}
Algorithm \ref{algo:redproc} terminates, $red(\varphi)$ is polynomial in the size of $\varphi$,
does not contain epistemic operators, and is equivalent to $\varphi $.
\end{lemma}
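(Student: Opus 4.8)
The plan is to prove the four assertions — termination, polynomial size, absence of epistemic operators, and equivalence — by reading the three lemmas as size-controlled rewrite rules and reducing the formula from its innermost epistemic operators outwards.

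Equivalence is the easiest part, so I would dispatch it first. Every elementary transformation performed by Algorithm~\ref{algo:redproc} replaces a subformula by a logically equivalent one: step~2 uses the Boolean validity relating a formula to its NNF, step~3 uses Lemma~\ref{lemma:val2}, step~4 uses Lemma~\ref{lemma:depth}, and step~5 uses Lemma~\ref{lemma:val1}. I would first record a replacement-of-equivalents principle: if $\psi \leftrightarrow \psi'$ is valid in $\logic$ and $\varphi'$ is obtained from $\varphi$ by substituting one occurrence of $\psi$ by $\psi'$, then $\varphi \leftrightarrow \varphi'$ is valid; this follows by a routine induction on $\varphi$ from the compositional truth conditions, since the $\know{i}{}$ clause and the temporal clauses are all monotone in the truth value of immediate subformulas. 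Chaining these equalities along the run of the algorithm yields $\redu(\varphi) \leftrightarrow \varphi$.

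For termination of the while loop I would attach to each formula the measure $\mu(\varphi)$ counting the occurrences of $\know{i}{}$ whose scope is epistemic-free but is not a propositional literal, which are exactly the operators eligible for step~1. One iteration picks such an innermost $\know{i}{\alpha}$, rewrites $\alpha$ into NNF (keeping it epistemic-free and introducing no modality), and then — because every subformula of $\alpha$ is epistemic-free, so the disjunction clause of Lemma~\ref{lemma:val2} is applicable — pushes $\know{i}{}$ down to the leaves of $\alpha$. The outcome is a Boolean combination of epistemic literals $\know{i}{\ell}$, each with literal scope, so the chosen eligible operator disappears while no new eligible operator is created, and $\mu$ strictly decreases; being a non-negative integer, it forces the loop to halt. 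On exit every remaining $\know{i}{}$ either has a literal scope or contains a modality in its scope. Absence of epistemic operators then follows by computing from the innermost operators outwards, which is the order in which steps~4 and~5 act: the innermost operator is an epistemic literal $\know{i}{\ell}$, eliminated by Lemma~\ref{lemma:val1}; Lemma~\ref{lemma:depth} first collapses any repeated or sandwiched occurrences $\know{i}{\know{i}{\ell}}$ and $\know{i}{\know{j}{\know{i}{\ell}}}$, and the next layer $\know{j}{}$ now sits on the epistemic-free formula produced at the previous level, to which step~3 distribution and Lemma~\ref{lemma:val1} apply again; iterating strips all modalities, and each application of Lemma~\ref{lemma:depth} or Lemma~\ref{lemma:val1} removes an operator, which also guarantees termination of steps~4–5.

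The main obstacle is the polynomial bound, because a reduction that first normalises each scope into CNF before collapsing modalities (as done inside the proof of Lemma~\ref{lemma:depth}) would be exponential. The key observation that rescues polynomiality is that the reduction never duplicates belief atoms: by Lemma~\ref{lemma:val1}, applying $\know{i}{}$ to a belief literal $(\neg)\bel{j}{p}$ produces at most that same literal together with the public atom $\visatom{j}{p}$, whereas $\know{i}{(\neg)\visatom{j}{p}}$ reduces to $(\neg)\visatom{j}{p}$ itself and therefore does not replicate. Tracking separately the number $B(\cdot)$ of belief-atom leaves and $V(\cdot)$ of visibility-atom leaves, one layer of distribution-plus-Lemma~\ref{lemma:val1} satisfies $B \le B(\alpha)$ and $V \le V(\alpha)+B(\alpha)$. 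I would then prove by induction on the epistemic nesting depth $d(\varphi)$ that $B(\redu(\varphi)) \le B(\varphi)$ and $V(\redu(\varphi)) \le V(\varphi) + d(\varphi)\cdot B(\varphi)$, the conjunction and temporal cases being additive and the modality case using the single-layer bound above: belief atoms are never multiplied and only linearly many visibility atoms accumulate per level. Since $d(\varphi)$, $B(\varphi)$ and $V(\varphi)$ are each at most $|\varphi|$, the size of $\redu(\varphi)$ is $O(|\varphi|^2)$, which yields the polynomial bound and completes the proof.
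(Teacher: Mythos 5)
Your proposal is correct and follows the same overall strategy as the paper: treat Lemmas~\ref{lemma:val1}--\ref{lemma:depth} as validity-preserving rewrite rules and chain them via replacement of equivalents. Where you genuinely diverge is in how you discharge the two claims the paper leaves almost entirely implicit. First, the paper's proof says nothing about termination or about why every epistemic operator is actually eliminated; your measure $\mu$ (counting operators with epistemic-free, non-literal scope) together with the observation that one pass of NNF-plus-distribution turns the chosen $\know{i}{\alpha}$ into a Boolean combination of epistemic literals, and your innermost-outwards iteration for nested modalities, fills a real gap --- in particular it handles configurations like $\know{j}{\know{i}{\ell}}$ with $j\neq i$, which Lemma~\ref{lemma:depth} does not cover directly and which the algorithm as literally written would otherwise leave stranded. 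Second, for the size bound the paper argues via Lemma~\ref{lemma:depth} that the modal depth collapses to at most $|\N|$ and that each epistemic literal therefore acquires at most $|\N|$ extra $\visatom{j}{p}$ conjuncts; you instead track the counts $B$ and $V$ of belief and visibility leaves directly, using the fact that Lemma~\ref{lemma:val1} never duplicates a belief atom and adds at most one visibility atom per belief atom per layer, obtaining $O(|\varphi|^2)$ without relying on the depth-collapse claim. Your route is slightly weaker quantitatively ($O(|\varphi|^2)$ versus $O(|\N|\cdot|\varphi|)$) but more robust, since it does not depend on the somewhat under-justified assertion that Lemma~\ref{lemma:depth} alone bounds the modal depth by $|\N|$. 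One cosmetic point: ``monotone in the truth value of immediate subformulas'' is not quite the right phrase for the replacement-of-equivalents induction (negation is not monotone); what you actually need, and what holds, is that each truth condition depends only on the extension of its immediate subformulas across histories and time points, so that valid equivalence of subformulas propagates upward.
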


\begin{proof}
Lines 3, 4 and 5 apply equivalences that are valid by Lemmas~\ref{lemma:val1}, \ref{lemma:val2} and \ref{lemma:depth}
and the fact that the following rule of replacement of equivalents
is admissible in $\logic$:
\begin{align*}
 &     \frac{   \psi_1 \leftrightarrow \psi_2 }{   \varphi \leftrightarrow \varphi[\psi_1 / \psi_2]    }
\end{align*}

The negated normal form of a propositional formula (treating epistemic literals as propositional literals) is constructed by propositional equivalence and is polynomial in the size of the initial formula.
Finally, since the modal depth is limited by Lemma~\ref{lemma:depth} to $|N|$, we add a maximum of $|N|$ extra variables of the form $\visatom{j}{p}$ for some $j$ and some $p$ to the translation of each epistemic literal.
\end{proof}

Using Lemma \ref{redprocedure},
we are able
to polynomially reduce every formula
of the $\logic$
to an equivalent formula
of the  fragment $\logicminus$
whose language $\langminus$ is defined by the following BNF:
\begin{center}\begin{tabular}{lcl}
 $\varphi$  & $\bnf$ & $
 \bel{i}{p}  \mid \visatom{i}{p} \mid \neg\varphi \mid \varphi_1 \wedge \varphi_2  \mid \nextop{\varphi} \mid \until{\varphi_1}{\varphi_2} $\
\end{tabular}\end{center}
where $i$ ranges over $\N$
and $p$
ranges over $\I$. We first show the following:

\begin{proposition}\label{modelchecking}
The model checking problem of $\logicminus$
is PSPACE-complete.
\end{proposition}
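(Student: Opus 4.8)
The plan is to establish the two directions separately, using throughout the fact that $\logicminus$ is, both syntactically and semantically, just $\mathsf{LTL}$ whose atomic propositions are the state predicates $\bel{i}{p}$ and $\visatom{i}{p}$. Indeed, by the truth conditions the satisfaction of $\bel{i}{p}$ and $\visatom{i}{p}$ at a position $k$ of a history $H$ depends only on the state $H_k$, while the clauses for $\neg$, $\wedge$, $\nextop{}$ and $\until{}{}$ are exactly the standard $\mathsf{LTL}$ clauses. Hence a history is an $\omega$-word over valuations of the finitely many atoms, and model checking an $\langminus$-formula is an instance of $\mathsf{LTL}$ model checking over the transition system generated by $\succfunct$; what we aim at is the analogue of the Sistla--Clarke theorem for that system.

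For membership in PSPACE I would use the automaton-based, on-the-fly method rather than an explicit product, the subtle point being that the state space $\stateset$ has size $2^{\Theta(nm)}$ and must never be materialised. First I would build a B\"uchi automaton $\mathcal{A}_{\neg\varphi}$ of size $2^{O(|\varphi|)}$ accepting the $\omega$-words that violate $\varphi$, and then search nondeterministically for a lasso-shaped accepting run in the synchronous product of $\mathcal{A}_{\neg\varphi}$ with the transition system. Each configuration of this search stores only one state $S=(\prof B,\prof V)\in\stateset$, coded in $O(nm)$ bits, together with one automaton state, coded in $O(|\varphi|)$ bits; a single transition is taken by guessing a joint action $\prof a\in\jactions$ and verifying $S'=\prof a(S)$ in polynomial space, since $\succfunct$ is polynomial-time computable. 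The whole search is thus in NPSPACE, hence in PSPACE by Savitch's theorem, and the universal variant is bounded in the same way since PSPACE is closed under complementation.

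The hardness direction is the step I expect to be the main obstacle, because one must verify that the influence dynamics are expressive enough to encode arbitrary temporal behaviour rather than merely invoke $\mathsf{LTL}$ abstractly. Here I would reduce from $\mathsf{LTL}$ satisfiability, which is PSPACE-complete. Given an $\mathsf{LTL}$ formula $\psi$ over variables $q_1,\dots,q_r$, I would take $\N=\{1,\dots,r\}$ and $\I=\{p_1,\dots,p_r\}$ with an arbitrary network and initial state, and set $\psi'$ to be the result of substituting $q_i\mapsto\visatom{i}{p_i}$ in $\psi$. The key observation is that the visibility of $p_i$ in the next state is determined solely by agent $i$'s own action, so by playing $\reveal{p_i}$ or $\hide{p_i}$ each agent sets $\visatom{i}{p_i}$ to an arbitrary value, independently of the others and of every private opinion. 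Consequently the histories, read from position $1$ onward and projected onto $\visatom{1}{p_1},\dots,\visatom{r}{p_r}$, realise exactly every $\omega$-word over $\{0,1\}^r$; therefore $\psi$ is satisfiable iff some history $H$ satisfies $H,1\models\psi'$, that is, iff $H,0\models\nextop{\psi'}$. This gives PSPACE-hardness of the existential model-checking problem, and the universal variant follows symmetrically by reducing from $\mathsf{LTL}$ validity. Combining the two directions will yield PSPACE-completeness.
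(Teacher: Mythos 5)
Your proposal is correct and follows essentially the same route as the paper: membership comes from viewing $\logicminus$ as $\mathsf{LTL}$ over the finitely many atoms $\bel{i}{p}$ and $\visatom{i}{p}$ (the paper simply cites Sistla--Clarke, where you spell out the on-the-fly automaton search that avoids materialising the exponential state space), and hardness comes from the observation that the visibility atoms are freely controllable by the agents' actions, so the visibility-only fragment realises every $\omega$-word. Your extra care with the one-step offset (prefixing $\nextop{}$ because the initial state's visibility is fixed and actions only take effect at the next instant) is a detail the paper glosses over, but it does not change the approach.
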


\begin{proof}
To verify membership
it is sufficient to note
that $\logicminus$
is a special instance
of $\mathsf{LTL}$
built out of
the finite set of atomic propositions $\{    \bel{i}{p} :  i \in \N  \text{ and } p \in \I      \}
\cup \{    \visatom{i}{p} :  i \in \N  \text{ and } p \in \I      \}$
and interpreted over a subset of the set
of all possible histories for this language.
Since the model checking problem
for $\mathsf{LTL}$ is
in PSPACE
\cite{SistlaClarkeLTL},
the model checking problem
for  $\logicminus$
should also be
in PSPACE.

To check that model checking
of
$\logicminus$
is PSPACE-hard
we are going to
consider the following fragment
of $\langminus$:
\begin{center}\begin{tabular}{lcl}
 $\varphi$  & $\bnf$ & $
\visatom{i}{p} \mid \neg\varphi \mid \varphi_1 \wedge \varphi_2  \mid \nextop{\varphi} \mid \until{\varphi_1}{\varphi_2} $
\end{tabular}\end{center}
where $i$ ranges over $\N$
and $p$
ranges over $\I$.

It is straightforward to check that
 the
set of histories $\histset$
includes all possible interpretations
for this language which is nothing but
the $\mathsf{LTL}$ language
built out of
the finite set of atomic propositions $\{    \visatom{i}{p} :  i \in \N  \text{ and } p \in \I      \}$.
Since the model checking for $\mathsf{LTL}$ is
is known to be PSPACE-hard
\cite{SistlaClarkeLTL},
it follows that the model checking
for $\logicminus $
is PSPACE-hard too.
\end{proof}

We can now state the following theorem
about complexity of model checking
for $\logic$.

\begin{theorem}\label{thm:completeness}
The model checking problem of $\logic$
is PSPACE-complete.
\end{theorem}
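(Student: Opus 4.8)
The plan is to establish the two directions separately, each being an immediate consequence of results already proved. For membership in PSPACE, I would compose the epistemic-elimination reduction with the model-checking procedure for the epistemic-free fragment. Concretely, given an input formula $\varphi \in \lang$ together with a (finitely represented) history $H$ and a position $k$, first run Algorithm~\ref{algo:redproc} to obtain $\redu(\varphi)$. By Lemma~\ref{redprocedure} this takes polynomial time, yields a formula of $\langminus$ of size polynomial in $|\varphi|$, and satisfies $H, k \models \varphi$ iff $H, k \models \redu(\varphi)$ for every history and every position, so the two model-checking instances have the same answer. I would then feed $\redu(\varphi)$ to the PSPACE procedure of Proposition~\ref{modelchecking}. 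Since the reduction is polynomial and is followed by a PSPACE computation on an input of polynomial size, the whole procedure runs in polynomial space, placing model checking for $\logic$ in PSPACE.

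For hardness, I would observe that the BNF for $\langminus$ is exactly the BNF for $\lang$ with the epistemic clause $\know{i}{\alpha}$ dropped; hence every $\logicminus$-formula is already a $\logic$-formula, and it is interpreted over the very same set of histories $\histset$. Therefore the PSPACE-hardness of model checking for $\logicminus$ established in Proposition~\ref{modelchecking} transfers verbatim to $\logic$: an algorithm deciding $H, k \models \varphi$ for arbitrary $\varphi \in \lang$ in particular decides it for every $\varphi \in \langminus$, so it inherits the lower bound.

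The two directions together give PSPACE-completeness. There is essentially no hard step remaining, since the genuine work, namely the correctness and polynomiality of the reduction, is already encapsulated in Lemma~\ref{redprocedure}. The only point demanding a little care is the bookkeeping for the membership argument: one must check that the equivalence of Lemma~\ref{redprocedure} is a semantic equivalence over all histories (so that the answer is preserved for the specific $H, k$ given as input), and that composing a polynomial-time reduction with a PSPACE routine does not escape PSPACE, which holds precisely because the intermediate formula $\redu(\varphi)$ has size polynomial in $|\varphi|$ and PSPACE is closed under polynomial-time many-one reductions.
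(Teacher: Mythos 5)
Your proposal is correct and follows essentially the same route as the paper: membership via the polynomial, equivalence-preserving elimination of the epistemic operator (Lemma~\ref{redprocedure}) composed with the PSPACE procedure for $\logicminus$ (Proposition~\ref{modelchecking}), and hardness because $\logicminus$ is a sublogic of $\logic$ over the same set of histories. You merely spell out the bookkeeping (closure of PSPACE under polynomial-time reductions, semantic equivalence over all histories) that the paper leaves implicit.
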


\begin{proof}
By Proposition~\ref{modelchecking} we know that model checking for $\logicminus$ is PSPACE-complete.
Every formula of $\logic$ can be reduced to an equivalent
formula of polynomial size in $\logicminus$ by Lemma~\ref{redprocedure},
showing membership in PSPACE of model checking for $\logic$.
Since $\logicminus$ is a sublogic of $\logic$ we also obtain PSPACE-hardness.
\end{proof}

\section{Games of influence}\label{sec:games}       

We are now ready to put together all the definitions introduced in the previous sections and give the following definition:

\begin{definition}[Influence game]

An influence game
is a tuple
$ \infgame = (\N, \I, E, F_i, S_0, \gamma_1, \ldots, \gamma_n )$
where $\N$,
$\I$, $E$ and $S_0$ are, respectively,
a set of agents, a set of issues, an influence network, and
an initial state, $F_i$ are aggregation procedures, one for each agent,
and $\gamma_i \in  \lang$ is agent $i$'s goal.
\end{definition}

%

\subsection{Strategies}

The following definition introduces
the concept of strategy.
The standard definition would call for a function
that assigns in each point in time an action to each player.
We choose to study simpler state-based strategies:

\begin{definition}[strategy]
A strategy for player $i$  is a function that associates an action to every information state, i.e.,
$\strat_i : \stateset \to \actions$ such that $\strat_i(S)=\strat_i(S')$ whenever $S\sim_i S'$.
A strategy profile is a tuple $ \prof{\strat} = (\strat_1, \ldots, \strat_n)$.
\end{definition}


For notational convenience, we interchangeably use
$\prof{\strat}$ to denote a strategy profile $(\strat_1, \ldots, \strat_n)$
and the function $\prof{\strat} : \stateset \longrightarrow \jactions$
such that $\prof{\strat} (S ) = \prof{a}$ if and only if $\strat_i (S ) = a_i$, for all $S \in \stateset$
 and  $i \in \N$.
As the following definition highlights, every strategy profile $\prof{\strat}$ combined with an initial state $ S_0$ induces a history:

\begin{definition}[Induced history]\label{def:inducedhistory}
Let $S_0$ be an initial state
and let
$ \prof{\strat}$
be a strategy profile.
The history $H_{S_0, \prof{\strat}} \in \histset$ induced by them
is defined
as follows:
\begin{align*}
H_0 (S_0, \prof{\strat}) & =  S_0\\
H_{n+1} (S_0, \prof{\strat}) & = \succfunct(S_n, \prof{\strat}(S_n))  \text{ for all } n \in  \mathbb{N}
\end{align*}
\end{definition}

%
%
%
%
%
%

\subsection{Solution concepts}\label{sec:solutions}

%
We start with the concept of winning uniform strategy.
Intuitively speaking,
$\strat_i$ is a winning uniform
strategy for player $i$
if and only if $i$ knows that, by playing this strategy, she
will achieve her goal no matter what the other players
will decide to do.

\begin{definition}[Winning strategy]
Let
$ \infgame$
be an influence game
and let $\strat_i$ be
a strategy for player $i$.
We say that $\strat_i$
is a winning strategy for player $i$
if and only if
\begin{align}\label{formula:winning}
H_{S_0 , (\strat_i,\prof{\strat}_{-i})  } \models \gamma_i
\end{align}
for all profiles $\prof{\strat}_{-i}$ of strategies of players other than $i$.
A winning strategy is called \emph{uniform} if $(\ref{formula:winning})$ is true for all states $S \in S_0^{\sim_i}$.
\end{definition}

Observe that a winning strategy is not necessarily winning uniform, as the private state of an agent is not necessarily accessible to the other players.

\begin{example}
Let Ann, Bob and Jesse be three agents.
Let $p$ be an issue, and suppose that $B_{Ann}(p) = 1$,  $B_{Bob}(p) = 0$, $B_{Jesse}(p) = 0$.
Ann influences Bob and Jesse, while Bob influences Jesse as shown in the following picture:
\begin{center}
\begin{tikzpicture}
\node (ann) at (0,0) {$Ann$};
\node (bob) at (1.7,0.5) {$Bob$};
\node (jesse) at (0,1) {$Jesse$};
\draw [->]  (ann) -- (bob) ;
\draw [->]  (ann) -- (jesse) ;
\draw [->]  (bob) -- (jesse) ;
\end{tikzpicture}
\end{center}
Suppose that the goal of Ann is $\eventually{\henceforth{ \bel{Jesse}{p}}}$.
Her winning (uniform) strategy is $\reveal{p}$ in all states: Bob will be influenced to believe $p$ in the second stage, and subsequently Jesse will also do so, since her influencers are unanimous (even if Bob plays $\hide{p}$).

\end{example}

As we will show in the following section, the concept of winning strategy is too strong for our setting.
Let us then define the less demanding notion of weak dominance:

\begin{definition}\label{def:wds}
Let
$ \infgame$
be an influence game
and let $\strat_i$ be
a strategy for player $i$.
We say that $\strat_i$
is a \emph{weakly dominant strategy} for player $i$ and initial state $S_0$
if and only if for all profiles $\prof{\strat}_{-i}$ of strategies of players other than $i$
and for all strategies $\strat_i'$ we have:
\begin{align}\label{formula:wds}
H_{S_0 , (\strat_i',\prof{\strat}_{-i})  } \models \gamma_i \Rightarrow H_{S_0 , (\strat_i,\prof{\strat}_{-i})} \models \gamma_i
\end{align}
A weakly dominant strategy is called \emph{uniform} if $(\ref{formula:wds})$ is true for all initial states $S \in S_0^{~_i}$.
\end{definition}

\begin{example}
%
%
Let us go back to the previous example and suppose now that Ann still believes $p$, but does not influence Jesse any longer.
In this case, Ann does not have a winning strategy: if neither Bob nor Jesse do not believe $p$, it is sufficient for Bob to play $\reveal{p}$ to make sure that she will never satisfy her goal.
However, the strategy $\reveal{p}$ is a weakly dominant strategy for Ann.
\end{example}

Now let us consider the following concept of best response.
Intuitively speaking, $\strat_i$ is a
best response to $\prof{\strat}_{-i}$
if and only if player $i$
knows that the worst she could possibly
get by playing
$\strat_i$, when
the others play
$\prof{\strat}_{-i}$,
is better or equal to the worst she could possibly get by
playing a strategy different from
$\strat_i$.

\begin{definition}[Best response]\label{def:bestresponse}
Let $\infgame$ be an influence game, let $\strat_i \in \stratset_{i}$
and  let $\prof{\strat}_{-i} \in \stratset_{-i}$.
We say that $\strat_i$
is a best response to  $ \prof{\strat}_{-i}$ wrt. initial state $S_0$
if and only if for all  $\strat_{i}' \in \stratset_{i}$:
\begin{multline*}
(\forall S\in  S_0^{\sim_i} \; (H_{S , (\strat_i,\prof{\strat}_{-i})  } \models \gamma_i) \text{ or } \\
   (\exists S,S'\in  S_0^{\sim_i}\;H_{S , (\strat_i,\prof{\strat}_{-i})  } \not \models \gamma_i  \text{ and }
                 H_{S' , (\strat_i',\prof{\strat}_{-i})  } \not\models \gamma_i  ))
\end{multline*}
\end{definition}
If we rephrase this definition through some utility notion, then we can consider a fictitious utility $U_i(S)= 1$ for states $S$  that satisfy the goal of agent $i$, and $U_i(S)= 0$ for states where the goal is not satisfied. In that case, our definition of best response corresponds to $\min_{    S\in  S_0^{\sim_i}      } U_i (H_{S , (\strat_i,\prof{\strat}_{-i})  }) \geq \min_{    S\in  S_0^{\sim_i}       } U_i (H_{S , (\strat_i',\prof{\strat}_{-i})  })$.
This definition is justified on the basis of the prudential criterion according to which,
if an agent
does not have a probability distribution
over the set of possible states,
she should
focus on the worst possible
outcome and choose the action
 whose worst possible outcome is at least as good as the worst possible outcome of any other actions (see, e.g., \cite{Wald,BoutilierKR,BrafmanTennenholtz}).

Definition~\ref{def:bestresponse} allow us to define the concept of Nash equilibrium
for games with incomplete information such as influence games:

\begin{definition}[Nash equilibrium]
Let
$ \infgame$
be an influence game
and let $\prof{\strat}$ be a strategy profile.
$\prof{\strat}$ is a Nash equilibrium
if and only if, for all $i \in \N$,
$\strat_i$
is a best response to  $\prof{\strat}_{-i}$.
\end{definition}


\subsection{Influence network and solution concepts}

In this section we show some preliminary results about the interplay between the network structure and the existence of solutions concepts.
In what follows we only consider influence games where the aggregation function is the unanimous one (see Definition~\ref{def:agg}). In the interest of space, most proofs will only be sketched.
Let us first give the following:

\begin{definition}
A goal $\gamma$ is coherent with an initial state $S_0$ in game $\infgame$ if and only if there exists a strategy profile $\strat$ inducing history $H$ such that $H_{S_0, \strat}\models \gamma$.
\end{definition}

Clearly, if $\gamma_i$ is not coherent with initial state $S_0$, then all strategies for player $i$ are equivalent.
The following lemma shows that visibility goals cannot be enforced by means of a winning strategy:

\begin{proposition}
If $\gamma_i$ entails one formula of the form $\visatom{j}{p}$, $\know{j}{\bel{i}{p}}$, or $\nextop{\visatom{j}{p}}$ for $j\not = i$, with belief and visibility atoms eventually negated, then $i$ does not have a winning strategy.
\end{proposition}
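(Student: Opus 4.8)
The plan is to argue directly from the definition of a winning strategy. A strategy $\strat_i$ is winning only if $H_{S_0,(\strat_i,\prof{\strat}_{-i})}\models\gamma_i$ for \emph{every} opponent profile $\prof{\strat}_{-i}$, so to establish that no winning strategy exists it suffices to fix an arbitrary $\strat_i$ and produce a single profile $\prof{\strat}_{-i}$ whose induced history refutes $\gamma_i$. Because $\gamma_i$ entails the relevant subformula $\psi$, any history refuting $\psi$ also refutes $\gamma_i$; the entire argument therefore reduces to showing that the opponents can force $\psi$ to be false, regardless of what $\strat_i$ prescribes.

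The key observation I would isolate first is that, for $j\neq i$, agent $i$ has no action that affects $\vis_j$: inspecting the transition function, the updated value $\vis'_j(p)$ depends only on $a_j$, equalling $1$, $0$, or $\vis_j(p)$ according to whether $a_j$ is $\reveal{p}$, $\hide{p}$, or $\skipact$. Hence at every time $t\geq 1$ agent $j$ can unilaterally pin $\vis_j^t(p)$ to any chosen value by playing $\reveal{p}$ or $\hide{p}$ at time $t-1$, independently of all other players. This immediately settles the form $\nextop{\visatom{j}{p}}$: to refute it, let $j$ play $\hide{p}$ at time $0$, forcing $\vis_j^1(p)=0$, while every other opponent plays $\skipact$; to refute the negated form $\nextop{\neg\visatom{j}{p}}$, let $j$ play $\reveal{p}$ instead. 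Since the relevant atom is fixed by $j$'s own previous move, this works against every $\strat_i$.

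For the bare form $\visatom{j}{p}$ I would handle the evaluation point symmetrically: if the entailed requirement concerns some time $t\geq 1$, the same deviation by $j$ applies, and if it concerns the initial instant its truth is already fixed by $S_0$, so either it fails there (and no strategy wins a fortiori) or the obstruction is pushed to a future step as above. For the epistemic form I would not argue directly but reduce it by Lemma~\ref{lemma:val1}: the atom $\know{j}{\bel{i}{p}}$, and under the negation allowed by the statement $\know{j}{\neg\bel{i}{p}}$, is equivalent by clauses $(ii)$ and $(v)$ to a conjunction $\bel{i}{p}\wedge\visatom{i}{p}$ (respectively $\neg\bel{i}{p}\wedge\visatom{i}{p}$); after this rewriting the problem is reduced to forcing one conjunct false, and I would invoke the replacement-of-equivalents rule (as in Lemma~\ref{redprocedure}) to transfer the refutation back to $\gamma_i$. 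The phrase ``with belief and visibility atoms eventually negated'' I read as closure under negating these atoms, which is harmless, since each negated instance is refuted by the opponent choosing the opposite action.

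The step I expect to be the main obstacle is precisely the epistemic form. After the Lemma~\ref{lemma:val1} reduction, the surviving conjunct that the opponents must falsify is a condition on $i$'s \emph{own} belief $\bel{i}{p}$ rather than a visibility atom about $j$, and $i$ does control $\visatom{i}{p}$; so the clean ``$j$ governs $\vis_j$'' argument does not close the case by itself. The real work is to show that the adversaries can drive $\bel{i}{p}$ to its negation through $i$'s influencers — making all of $i$'s revealing influencers unanimously disagree with $i$ on $p$, which by Definition~\ref{def:agg} flips $i$'s private opinion — and I would make sure this manoeuvre is available whenever the entailed requirement is a genuine future or maintenance condition on $i$'s belief, so that the construction of the refuting $\prof{\strat}_{-i}$ goes through.
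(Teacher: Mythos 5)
Your treatment of the two visibility forms coincides with the paper's own argument: the paper's entire justification is the single observation that ``this second agent can always respond $\hide{p}$'', i.e.\ that $\vis_j$ is under $j$'s unilateral control through the transition function, and you spell this out correctly (and in more detail than the paper does), including the symmetric use of $\reveal{p}$ against the negated atoms.

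The genuine gap is the epistemic case, which you have correctly located but not closed. After the Lemma~\ref{lemma:val1} reduction, $\know{j}{\bel{i}{p}}$ becomes $\bel{i}{p}\wedge\visatom{i}{p}$, a formula in $i$'s \emph{own} atoms, and the completion you defer to ``the real work'' --- having the opponents flip $\bel{i}{p}$ through $i$'s influencers --- is not available in general. If $\Inf(i)=\emptyset$, or if $i$'s influencers cannot be brought to unanimous public disagreement with her, then $\bel{i}{p}$ is frozen at its initial value, $i$ controls $\visatom{i}{p}$ by playing $\reveal{p}$, and $i$ \emph{does} have a winning strategy for, say, $\eventually{(\know{j}{\bel{i}{p}})}$ whenever $B_i(p)=1$ in $S_0$. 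So that manoeuvre cannot be made to work without additional hypotheses on the network, and your proof as written does not establish this case. The paper's one-line proof reveals how it is meant to go: the epistemic conjunct is intended to reduce, like the other two forms, to something containing a visibility atom $\visatom{k}{p}$ for some $k\neq i$ (as happens for $\know{i}{\bel{j}{p}}\leftrightarrow(\bel{j}{p}\wedge\visatom{j}{p})$ by Lemma~\ref{lemma:val1}$(ii)$), after which the same ``opponent hides'' move disposes of all three forms uniformly. Read literally as printed, the indices in $\know{j}{\bel{i}{p}}$ make the statement false without a caveat excluding source agents; either way, the case you flagged as the main obstacle is indeed where your argument stops short.
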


To see this, consider that if an individual goal $\gamma_i$ concerns the visibility of another agent about a given issue $p$, then this second agent can always respond $\hide{p}$ and make sure that $\gamma_i$ is false.

%
%

Let us now introduce a simpler language for goals, in order to study the limitations of considering winning strategies in this setting.
Let $\logicgoal_{\mathcal J}$ be the language of future goals about a subset of agent $\mathcal J \subseteq \N$, which focuses on the future opinions of agents in $\mathcal J$ without considering the visibility. This language is defined by the following BNF:

\begin{center}\begin{tabular}{lcl}
 $\alpha$  & $\bnf$ & $\bel{j}{p} \mid \neg\bel{j}{p}  \mid \alpha \wedge \alpha$\\
 $\varphi$  & $\bnf$ & $\nextop{\alpha}  \mid \nextop{\varphi} \mid \henceforth{\varphi} \mid \eventually{\varphi}$\
\end{tabular}\end{center}

\noindent
where $j\in\mathcal J$ and $p$ ranges over $\I$.

Let us introduce some further notation. If $i,j\in \N$ we say that $i$ \emph{controls} $j$ if either $\Inf(j)=\{j\}$, or for all paths $\l_1,\dots,l_n$ such that $(l_i, l_{i+1})\in E$, $l_1=i$ and $l_n=j$, we have that $i$ controls each $l_k$. We can now prove the following:

\begin{proposition}\label{prop:winning}
If $\gamma_i\in\logicgoal_{\mathcal J}$ then $i$ has a winning strategy for all initial states $S_0$ if and only if $i$ controls $j$ for all $j\in \mathcal J$.
\end{proposition}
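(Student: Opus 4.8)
The plan is to prove both directions through the structural notion of \emph{control}, which I first recast into a workable form: $i$ controls $j$ exactly when every influencer chain feeding into $j$ consists of agents controlled by $i$, the recursion bottoming out at $i$ herself (base case $j=i$, recursive case $\Inf(j)\neq\emptyset$ with $i$ controlling every $k\in\Inf(j)$). The semantic content I want to extract is a propagation lemma: if $i$ controls $j$, then by revealing her own opinion on $p$ at every stage, $i$ forces $B_j(p)$ to equal her (propagated) opinion within a number of steps bounded by the control depth of $j$, and keeps it there no matter what the other agents do. This holds because under the unanimous rule a flip at a node requires \emph{all} its revealing influencers to agree, so once every influencer of $j$ is itself driven to $i$'s value, no adversary can break that unanimity.

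For sufficiency ($\Leftarrow$), assume $i$ controls every $j\in\mathcal J$. I construct the information-based (hence automatically uniform) strategy $\strat_i$ that at each state reveals or hides $i$'s opinion on the relevant issues so as to drive every targeted agent toward the opinion demanded by $\gamma_i$. Using the propagation lemma and induction on control depth, I show each required literal $\bel{j}{p}$ or $\neg\bel{j}{p}$ becomes, and stays, true after a bounded delay $d$. The temporal operators are then discharged by bookkeeping on timing: for a goal $\nextop{\cdots\nextop{\alpha}}$ I need only enough $\mathsf{X}$-depth to absorb $d$; for $\eventually{\varphi}$ it suffices to reach the stabilised regime once; and for $\henceforth{\varphi}$ or $\henceforth{\eventually{\varphi}}$ the maintenance part of the propagation lemma keeps $\varphi$ true at every sufficiently late step (respectively infinitely often).

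For necessity ($\Rightarrow$) I argue the contrapositive: if $i$ fails to control some $j\in\mathcal J$, I exhibit an initial state $S_0$ and an adversarial profile $\prof{\strat}_{-i}$ defeating every strategy of $i$. Unfolding the failure of control, I locate a direct influencer $k\in\Inf(j)$ (or, if $\Inf(j)=\emptyset$, the node $j$ itself) that $i$ does not control; by a downward induction mirroring the control recursion, the adversary can keep $k$ holding and revealing the opinion opposite to the one $\gamma_i$ requires at $j$. Since the unanimous rule flips $j$ only when all revealing influencers agree, this single dissenting, uncontrolled influencer blocks $i$ from ever forcing the needed value at $j$; choosing $S_0$ so that $j$ starts at the wrong value then falsifies the belief atom about $j$ throughout the run, and hence falsifies $\gamma_i$.

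The hardest part is the sufficiency direction, and specifically two points. First, aligning the propagation delay with the temporal structure of $\gamma_i$: I must check that the transient window, while opinions percolate through the controlled sub-network, does not violate a maintenance ($\mathsf{G}$) requirement before stabilisation, which I expect to handle by exploiting that the base temporal case is $\nextop{\alpha}$ and by treating the initial window explicitly. Second, and genuinely delicate, is that $i$ can only ever propagate her \emph{own} opinion, so the value forced at each $j$ is $i$'s opinion on the issue; making this match what $\gamma_i$ demands, uniformly over the states of $S_0^{\sim_i}$, is exactly what pins the characterisation down to the control condition and must be stated with care (in particular for belief atoms whose required value is incompatible with $i$'s fixed opinion, where coherence of $\gamma_i$ with $S_0$ is the operative hypothesis). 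By contrast, the necessity direction's inductive ``weaponisation'' of an uncontrolled influencer I expect to be routine once the control recursion is unfolded.
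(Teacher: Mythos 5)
Your proposal matches the paper's own (sketched) proof in both directions: for sufficiency the strategy is to always play $\reveal{p}$ when $i$'s opinion agrees with what $\gamma_i$ demands at the target agents and $\hide{p}$ otherwise, letting the value propagate through the controlled sub-network under the unanimous rule; for necessity an uncontrolled influencer $k$ of $j$ reveals a dissenting opinion so that unanimity at $j$ can never be achieved from a suitably chosen $S_0$. The paper's version is terser---it dispatches the propagation/timing issues and the coherence caveat with a one-line remark and a single ``case study'' example---so the delicate points you flag (transient windows versus $\mathsf{G}$-goals, and the fact that $i$ can only ever propagate her own opinion) are exactly the gaps the published sketch leaves implicit rather than departures from it.
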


\begin{proof}
One direction is easier: if $i$ controls agent $j$, then her winning strategy is to always play $\reveal{p}$ in case her goal is consistent with her opinion, e.g. if her goal is $\bel{j}{p}$ and $ H_{i,0}^B (p) =1 $. Otherwise always playing $\hide{p}$ guarantees that her goal will be satisfied. Note that the consistency of $\gamma_i$ is crucial here.

For the other direction, consider a network in which $j$ has more than one influencer, say $k$, which is however not controlled by $i$.
A simple case study shows that there always exists an initial state in which $i$ does not have a winning strategy.
For instance, if $\gamma_i=\nextop{\bel{j}{p}}$, then in an initial state in which $B_j(p)=0$ it is sufficient for agent $k$ to play $\reveal{p}$ to make sure that agent $j$ never updates her belief, and hence that $\gamma_i$ will not be satisfied.
\end{proof}

Proposition~\ref{prop:winning} shows that the concept of winning strategy is too strong in influence games, as it can only be applied in situations in which an agent has exclusive control over the opinion of another.

Let us now focus on a particular influence game, in which the agents' goal is to reach a consensus about $p$.
That is, each agent $i$ adopts the following goal:

\begin{alignat*}{2}\label{eqn:consensus}
 \gamma_i^{+} =_{\mathit{def}} \eventually\nextop{\left(\bigwedge_{j\ne i} \bel{j}{p}\right)} \qquad &&
\gamma_i^- =_{\mathit{def}} \eventually\nextop{\left(\bigwedge_{j\ne i} \neg\bel{j}{p}\right)}
\end{alignat*}
Consensus means that the conjunction of each individual goal leads to a state where all agents have $p$ as opinion.

\begin{theorem}
If all agent $i$ has the goal of consensus represented by $\gamma_i^{+}$ (respectively  $\gamma_i^{-} $), and the network $E$ is fully connected, then for any initial state $S_0$ there always exists a Nash equilibrium $\widehat{\prof{\strat}}$ such that
$
H_{S_0 , \widehat{\prof{\strat}}  } \models \bigwedge_{i\in\N} \gamma_i^+
$ (respectively, $\gamma_i^-$).
Moreover, each strategy is weakly dominant.
\end{theorem}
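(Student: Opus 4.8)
The plan is to exhibit one history-independent candidate profile and let it do all the work. For the $\gamma_i^+$ case, set $\widehat{\strat}_i(S)=\reveal{p}$ when $B_i(p)=1$ and $\widehat{\strat}_i(S)=\hide{p}$ when $B_i(p)=0$; since this depends only on $B_i$, which is constant on each $\sim_i$-class, $\widehat{\strat}_i$ is a legitimate strategy (the $\gamma_i^-$ case is symmetric, revealing exactly when $B_i(p)=0$). First I would check that $\widehat{\prof{\strat}}$ reaches consensus from $S_0$. Assuming the goal is coherent, i.e.\ that at least one agent $a$ has $B_a(p)=1$ (if no agent believes $p$ then $F_i^U$ can never make any $j$ switch to $1$ and the consensus clause is simply vacuous), every $1$-believer reveals while every $0$-believer hides. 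Because $E$ is complete, the set $\Inf^{S_0}(j,p)$ of expressing influencers of any agent $j$ is exactly the current set of $1$-believers, so the unanimous rule moves every agent to $1$ in a single step, and that state is self-sustaining. Hence at time $1$ every $j\neq i$ believes $p$, giving $H_{S_0,\widehat{\prof{\strat}}}\models\bigwedge_{i\in\N}\gamma_i^+$.

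Next I would reduce the two remaining assertions to a single property. A strategy that is weakly dominant at every $S\in S_0^{\sim_i}$ is automatically a best response in the maximin sense of Definition~\ref{def:bestresponse}: whenever some alternative $\strat_i'$ satisfies $\gamma_i$ from every indistinguishable initial state, so does $\widehat{\strat}_i$, hence the minimum fictitious utility $U_i$ of $\widehat{\strat}_i$ is never below that of any competitor. Consequently, once each $\widehat{\strat}_i$ is shown to be weakly dominant, the Nash property of $\widehat{\prof{\strat}}$ and the ``moreover'' clause both follow immediately. So the entire theorem rests on verifying weak dominance.

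For weak dominance I would fix $\prof{\strat}_{-i}$, an initial state $S$ and an arbitrary alternative $\strat_i'$, and compare the runs $H$ (under $\widehat{\strat}_i$) and $H'$ (under $\strat_i'$). The guiding idea is that $\widehat{\strat}_i$ is the \emph{maximally $p$-favouring} behaviour available to $i$: because lying is forbidden, under $\widehat{\strat}_i$ the public opinion of $i$ equals $1$ exactly when $B_i(p)=1$ and is hidden otherwise, so $i$ never injects a $0$ into any neighbour's aggregation and injects a $1$ whenever she truthfully can. I would then try to prove, by induction on $t$, the monotonicity statement: for every $j\neq i$, if $j$ believes $p$ in $H'$ at time $t$ then $j$ also believes $p$ in $H$ at time $t$. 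This at once yields $H'\models\gamma_i^+\Rightarrow H\models\gamma_i^+$, since whenever all $j\neq i$ believe $p$ simultaneously in $H'$ the same holds in $H$. The first step is clean: $i$'s own update under $F_i^U$ depends only on her influencers' public opinions and \emph{not} on her own action, so $B_{i,1}$ coincides in the two runs; and every other agent, receiving from $i$ a $1$ or a hidden value in $H$ in place of whatever $\strat_i'$ produced, ends up believing $p$ whenever it did in $H'$, by a short case analysis on the unanimous rule.

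The hard part will be propagating this monotonicity past the first step. Since the whole visibility profile is observable to everyone (Definition~\ref{def:indistingui}), as soon as $i$'s action differs the states $H_t$ and $H'_t$ diverge, and the opponents' state-based strategies $\prof{\strat}_{-i}$ may then \emph{react differently} in the two runs, so the other agents' actions are no longer identical and naive monotonicity can fail. The crux is to rule out that such a visibility-triggered reaction is ever \emph{necessary} for consensus yet \emph{lost} under $\widehat{\strat}_i$. I would argue that any neighbour's switch to $1$ in $H'$ is witnessed by a unanimity among its expressing influencers, a configuration that replacing $\strat_i'$ by $\widehat{\strat}_i$ can only make easier to realise (it deletes $i$'s $0$-contributions and supplies $1$-contributions), while $i$'s own belief trajectory is pinned down identically in both runs because, on a complete network, it is insensitive to $i$'s own actions. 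Turning this comparison of the first consensus time in $H'$ into a fully rigorous transfer of each witnessing unanimity to $H$ is the delicate step I expect to consume most of the proof.
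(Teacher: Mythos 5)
Your candidate profile, the one-step verification that it reaches consensus on a complete network (when at least one agent initially believes $p$), and the reduction of both the Nash property and the ``moreover'' clause to weak dominance of each $\widehat{\strat}_i$ follow exactly the paper's route: the paper's own proof is the same construction together with a brief claim that $\reveal{p}$ is weakly dominant when $B_i(p)=1$ and $\hide{p}$ when $B_i(p)=0$. Your observation that $i$'s own belief trajectory is independent of her own action (the graph is irreflexive, so her update reads only her influencers' public opinions) is correct and useful, as is the base case of your monotonicity induction.

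The gap is the one you name yourself: the inductive step transferring ``$j$ believes $p$ in $H'$ at time $t$ implies $j$ believes $p$ in $H$ at time $t$'' past $t=1$ is not carried out, and as stated it is not obviously true. Once $V_i(p)$ differs between the two runs, an opponent's state-based strategy may reveal a $0$ in the run under $\widehat{\strat}_i$ that it withheld under $\strat_i'$, destroying exactly the unanimity that witnessed a switch in $H'$; so monotonicity of $F^U_i$ in $i$'s own contribution is not enough. Any repair must use additional structure, e.g.\ that opponents' reactions to $V_i(p)$ lag one step behind the instantaneous conversion that a revealed $1$ produces on a complete graph, and that the all-believe-$p$ state is absorbing under the unanimous rule. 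Without an argument of this kind the weak-dominance claim, and hence the theorem, is not established. (In fairness, the paper's own two-line justification does not address this either: it argues that when $\reveal{p}$ achieves consensus the alternatives $\skipact$ and $\hide{p}$ do not, which is the converse of the implication weak dominance requires.) A smaller point: for an initial state in which no agent believes $p$, the unanimous rule can never create a believer, so the displayed conclusion $H_{S_0,\widehat{\prof{\strat}}}\models\bigwedge_{i\in\N}\gamma_i^+$ fails for such $S_0$; your ``vacuous'' reading is not available because the theorem asserts the goals are actually satisfied, so an explicit coherence hypothesis on $S_0$ is needed, which both you and the theorem statement leave implicit.
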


\begin{proof}
Take an initial state $S_0 = (\B, \prof{\vis})$, and assume that all individuals have goal $\gamma_i^+$.
Consider the four possible states about $p$ for agent $i$: either $p$ is true or false, and $p$ is visible or not. Assume $B_i(p) = 1$, then, regardless of visibility, we show that strategy $\reveal{p}$ is weakly dominant.
For all $S\in  S_0^{\sim_i}$, either $H_{S, (\reveal{p},\strat_{-i}) } \models \bigwedge_{j\ne i} \bel{j}{p}$ or  $H_{S, (\reveal{p},\strat_{-i})  } \not\models \bigwedge_{j\ne i} \bel{j}{p}$;  for that former case, definition of unanimity entails that $H_{S, (\skipact, \strat_{-i}) } \not\models \bigwedge_{j\ne i} \bel{j}{p}$ and $H_{S, (\hide{p}, \strat_{-i}) } \not\models \bigwedge_{j\ne i} \bel{j}{p}$.
In a similar way, we can show that strategy $\hide{p}$ is a best response to any strategy $\prof{\strat_{-i}}$ if $B_i(p) = 0$.
We then built up a strategy profile $\widehat{\prof{\strat}}$ w.r.t.\ $B_i$: $\strat_i = \reveal{p}$ if $B_i(p) = 1$ otherwise $\strat_i = \hide{p}$, this strategy profile is a Nash equilibrium in weakly dominant strategies. 
\end{proof}

Observe that if the network $E$ is not fully connected this result does not hold since, for instance, the opinion of an isolated agent cannot be changed.
%
%
Let us now focus on specific shapes of the influence graph where weakly dominant strategy exists.
Those strategies concern the sources 
the agents who have no influencers.


\begin{theorem}
If $E$ is acyclic, $I^*$ is the set of sources, and $S_0$ is coherent with $\bigwedge_{i^*\in I^*} \gamma_i$, and each source \emph{influences only one individual}, then for all $i^*$ there exists a weakly dominant strategy.
\end{theorem}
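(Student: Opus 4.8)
The plan is to exploit the fact that a \emph{source} is an agent whose private opinion can never change, so that its only strategic lever is the visibility of that fixed opinion toward the single agent it influences.

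First I would establish the structural observation that sources have immutable opinions. If $i^*\in I^*$ then $\Inf(i^*)=\emptyset$, so $\Inf^{S}(i^*,p)=\emptyset$ in every state $S$ and for every $p$, and hence the first clause of $F^U_{i^*}$ always applies. Consequently, along any induced history the private opinion of $i^*$ stays equal to its value in $S_0$; write $b=B_{i^*}(p)$ for this immutable value (the argument factorises issue by issue). This collapses the strategic content of a strategy $\strat_{i^*}$ to a single broadcast decision per issue: at each stage $i^*$ either makes $b$ visible or hides it, and since $i^*$ has a unique out-edge, say to agent $j$, this decision can only ever matter for the update of $j$ through $F^U_j$.

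Next I would prove a stagewise monotonicity lemma. Fixing the public opinions that the \emph{other} influencers $\Inf(j)\setminus\{i^*\}$ present to $j$ at a given stage, a case analysis on whether those agree on $b$, on $\overline{b}$, disagree, or are all hidden shows that, relative to hiding, revealing $b$ never causes $j$ to leave opinion $b$ and never prevents $j$ from reaching $b$. Thus at every stage the action $\reveal{p}$ weakly dominates $\hide{p}$ and $\skipact$ for the objective ``$j$ holds opinion $b$''. Using coherence of $\bigwedge_{i^*\in I^*}\gamma_{i^*}$ with $S_0$, I would then fix, for each source, a consistent broadcast direction from the witnessing profile: if that profile drives $j$ toward $b$, the candidate strategy is to play $\reveal{p}$ in every state; if it requires $j$ to settle on $\overline{b}$, the candidate is to play $\hide{p}$ everywhere (the source cannot actively install $\overline{b}$, so non-interference is optimal). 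I would finally verify Definition~\ref{def:wds}: given any $\prof{\strat}_{-i^*}$ and any alternative $\strat_{i^*}'$ with $H_{S_0,(\strat_{i^*}',\prof{\strat}_{-i^*})}\models\gamma_{i^*}$, the stagewise domination lifts to a domination of the whole opinion trajectory of $j$, and then propagates through the future/temporal operators of $\gamma_{i^*}$ by induction on its construction, yielding $H_{S_0,(\strat_{i^*},\prof{\strat}_{-i^*})}\models\gamma_{i^*}$.

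The hard part will be the coupling in the last step, where acyclicity does the real work. Because the only edge leaving $i^*$ goes to $j$, every directed path out of $i^*$ begins $i^*\to j$; if $i^*$ could influence the opinion of some $k\in\Inf(j)\setminus\{i^*\}$ there would be a path $i^*\to j\leadsto k$, which together with $k\to j$ would form a cycle through $j$, contradicting acyclicity. Hence the \emph{opinions} of $j$'s other influencers evolve independently of $i^*$'s choice, which is what licenses treating them as exogenous in the monotonicity lemma. The genuine subtlety is that opponents' strategies are state-based: changing $i^*$'s action changes $j$'s opinion and the visited states, and a third agent could in principle observe $j$'s public opinion and alter its own visibility in response. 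The argument must therefore show that the only state component the domination truly depends on is the (insulated) opinion stream fed to $j$, and must carry the one-step ``$j$ reaches/keeps $b$'' domination through the temporal structure of $\gamma_{i^*}$. Goals referring to agents strictly downstream of $j$ (reachable only through $j$) are the most delicate case, since there the monotonicity in $j$'s opinion must be shown to survive further acyclic propagation.
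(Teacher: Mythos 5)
Your proposal follows essentially the same route as the paper's (sketched) proof: both extract a witnessing profile from coherence, split on whether the goal concerns the single directly-influenced agent, and use the monotonicity of the unanimous aggregation to argue that $\reveal{p}$ is weakly dominant when the source's goal agrees with its (immutable) opinion and $\hide{p}$ otherwise. Your explicit acyclicity argument insulating the other influencers of $j$, and your flagging of the state-based-strategy coupling and of downstream propagation, are exactly the details the paper defers to its unwritten ``case study,'' so this is a more careful rendering of the same argument rather than a different one.
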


\begin{proof}[sketch]
If $\bigwedge_{i^*\in I^*}\gamma_{i^*}$ is coherent with $S_0$ then there exists some induced history $H$ by some strategy $\prof{\strat}^c$ such that $H_{S_0, \prof{\strat}^c } \models \bigwedge_{i^*\in I^*}\gamma_{i^*}$.
Consider a source $i^*$ and its goal $\gamma_{i^*}$. 
Subformulas of its goal refer to agents that it directly influences or not. In the latter case all the strategies of the source will be equivalent (hence weak-dominant). 
If they talk about the (only) individuals that is influenced by the source, then by the monotonicity of the aggregation procedure it is weakly dominant to play $\reveal{p}$ if the source goal is coherent with the source's belief. And $\hide{p}$ otherwise. (A case study is required to obtain the full proof).
\end{proof}

This result, once more, shows the difficulty of playing an influence game. It is actually possible to exhibit examples of acyclic influence graphs with sources influencing multiple agents where  no weakly dominant strategies exist for these sources.

\subsection{Computational complexity}

In this section we exemplify the use of $\logic$ and the complexity results presented in Section~\ref{sec:goals} for the computation of strategic aspects of influence games.
We do so by providing a PSPACE algorithm to decide whether a strategy profile is a Nash equilibrium.

Let $\textsc{MEMBERSHIP(F)}$ be the following problem: given as input a set of individuals $\N$, issues $\I$, goals $\gamma_i$ for $i\in \N$ -- which together with $F$ form an influence game $\infgame$ -- and a strategy profile~$\prof{Q}$, we want to know whether~$\prof{Q}$ is a Nash equilibrium of~$\infgame$.

The algorithm presented in \cite{GutierrezEtAlIC20015} in the setting of iterated boolean games cannot be directly applied to our setting for two reasons.
First, our histories are generated by means of an aggregation function $F$ that models the diffusion of opinions -- i.e., agents have a concurrent control on a set of propositional variables.
Second, not all conceivable strategies are available to players, as we focus on state-based strategies.

We therefore begin by translating a state-based strategy in $\logic$.
Clearly, a conjunction of literals $\alpha(S)$ can be defined to uniquely identify a state $S$:
$\alpha(S)$ will specify the private opinion of all individuals and their visibility function.
Given action $a$, let

$$\beta_i(a)=
 \begin{cases}
 \nextop{\visatom{i}{p}} & \text{ if }  a=\reveal{p} \\
 \nextop{\neg\visatom{i}{p}} & \text{ if }a=\hide{p} \\
 \top & \text{ if } a=\skipact
 \end{cases}
$$
\noindent
We can now associate a $\logic$ formula to each strategy $Q_i$:
$$\tau_i(Q_i)=_{\mathit{def}}  \bigwedge_{S\in\stateset} \alpha(S) \rightarrow \beta_i(Q_i(S))$$

If $\prof{Q}$ is a strategy profile, let $\tau(\prof{Q})=\bigwedge_{i\in\N} \tau_i(Q_i)$.
We now need to encode the aggregation function into a formula as well. Recall the unanimous issue-by-issue aggregation function of Definition~\ref{def:agg} and consider the following formulas $\text{\it unan}(i,p)$:
\begin{align*}
& \nextop{\ \bel{i}{p}} \leftrightarrow \\
& \Big(
\big[\bigwedge_{j \in \Inf(i)}
\mathsf{X} \neg \visatom{j}{p} \wedge \bel{i}{p} \big] \vee \\
& \big[\bigvee_{j \in \Inf(i)}
\mathsf{X} \ \visatom{j}{p}
 \wedge
 \bigwedge_{j \in \Inf(i)}
 ( \mathsf{X} \ \visatom{j}{p} \rightarrow \bel{j}{p} ) \big] \vee \\
 & \big[\bigvee_{j,z \in \Inf(i):}
 ( \mathsf{X} \ \visatom{j}{p} \wedge \mathsf{X} \ \visatom{z}{p} \wedge \\
 &  \bel{j}{p} \wedge \neg \bel{j}{p}) \wedge \bel{i}{p} \big]
  \Big)
 \end{align*}
\noindent
as well as the following formula $\text{\it unan}(i,\neg p)$:
\begin{align*}
& \nextop{\ \neg \bel{i}{p}} \leftrightarrow\\
& \Big(
\big[\bigwedge_{j \in \Inf(i)}
\mathsf{X} \neg \visatom{j}{p} \wedge  \neg \bel{i}{p} \big]\vee \\
&
\big[\bigvee_{j \in \Inf(i)}
\mathsf{X} \ \visatom{j}{p}
 \wedge
 \bigwedge_{j \in \Inf(i)}
 ( \mathsf{X} \ \visatom{j}{p} \rightarrow  \neg \bel{j}{p} ) \big] \vee \\
 &
 \big[\bigvee_{j,z \in \Inf(i):}
 ( \mathsf{X} \ \visatom{j}{p} \wedge \mathsf{X} \ \visatom{z}{p} \wedge \\
&   \neg \bel{j}{p} \wedge  \bel{j}{p}) \wedge  \neg \bel{i}{p} \big]
  \Big)
\end{align*}
\noindent
This formula ensures that if the influencers of agent $i$ are unanimous, then agent $i$'s opinion should be
defined according to the three cases described in Definition~\ref{def:agg}. Recall that, while actions take one time unit to be effectuated (hence the $\nextop$ operator in front of $\visatom{j}{p}$), the diffusion of opinions is simultaneous.
Let now:
$$\tau(F^U_i)=_{\mathit{def}}\bigwedge_{\{i \in \N \mid \Inf(i)\not=\emptyset\}}  \bigwedge_{\{p \in \I\}} ( \text{\it unan}(i,p) \wedge  \text{\it unan}(i,\neg p) ) $$

\noindent
This formula encodes the transition process defined by the opinion diffusion.
$\tau(F^U_i)$ is polynomial in both the number of individuals and the number of issues (in the worst case it is quadratic in  $n$ and linear in $m$).
We are now ready to prove the following result:

\begin{theorem}\label{thm:nash}
$\textsc{MEMBERSHIP}(F_i^U)$ is in PSPACE.
\end{theorem}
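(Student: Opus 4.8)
The plan is to reduce the $\textsc{MEMBERSHIP}(F_i^U)$ problem to a collection of $\logic$ model-checking queries, and then invoke Theorem~\ref{thm:completeness} together with the polynomial-size reductions already established. First I would observe that a strategy profile $\prof{Q}$ is a Nash equilibrium exactly when, for every agent $i$, the strategy $\strat_i$ is a best response to $\prof{\strat}_{-i}$ in the sense of Definition~\ref{def:bestresponse}. Since there are only $n$ agents, it suffices to show that each such best-response check can be carried out in PSPACE; as PSPACE is closed under the finite conjunction (we simply reuse the same polynomial space across the $n$ checks), this yields membership for the whole problem.

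The heart of the argument is the encoding already set up before the statement. Given the strategy profile $\prof{Q}$, the formula $\tau(\prof{Q}) = \bigwedge_{i\in\N}\tau_i(Q_i)$ pins down, via the conjunctions $\alpha(S)\to\beta_i(Q_i(S))$, exactly the actions each player takes in each state, and $\tau(F^U_i)$ encodes the unanimous diffusion transition; crucially, both are polynomial in $n$ and $m$. The key step is to show that the semantic condition in Definition~\ref{def:bestresponse} can be rewritten as an $\logic$ formula to be checked over histories. Concretely, to test whether $\strat_i$ is a best response against $\prof{\strat}_{-i}$, I would, for each candidate deviation $\strat_i'$, form the constrained history description: the conjunction of $\tau(F^U_i)$ (the diffusion dynamics), the encoding of the other players' fixed strategies $\bigwedge_{j\ne i}\tau_j(Q_j)$, and the encoding of $i$ playing either $\strat_i$ or $\strat_i'$, all anchored at an initial state $S$ ranging over $S_0^{\sim_i}$ via $\alpha(S)$. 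The prudential best-response condition — either $\gamma_i$ holds from every $S\in S_0^{\sim_i}$ under $\strat_i$, or there exist states witnessing failure under both $\strat_i$ and $\strat_i'$ — is a first-order combination of finitely many such model-checking outcomes, since $S_0^{\sim_i}$ is a finite (at most exponential, but enumerable in polynomial space) set of states and the strategies are finite objects.

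The main obstacle I anticipate is the quantification over \emph{all} deviating strategies $\strat_i'$ and over \emph{all} counter-profiles $\prof{\strat}_{-i}$, both of which are exponentially large sets, so a naive enumeration is not obviously in PSPACE. The way I would handle the universal quantifier over $\prof{\strat}_{-i}$ is to note that, rather than fixing the opponents' strategies explicitly, one can leave the opponents' actions unconstrained in the history encoding and let $\logic$'s semantics over the restricted set of histories $\histset$ range over all resulting runs; the best-response condition then becomes a statement of the form ``for all histories consistent with $i$ playing $\strat_i$ and with the diffusion dynamics, $\gamma_i$ holds (or the appropriate worst-case comparison holds).'' This turns the universal quantifier into the $\mathsf{LTL}$ universal-path reading already built into model checking. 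For the quantification over deviations $\strat_i'$, I would similarly leave $i$'s actions free when checking for the existence of a failure witness, reducing the existential over $\strat_i'$ to satisfiability of a single polynomial-size $\logic$ formula. By Lemma~\ref{redprocedure} each such formula reduces to an equivalent polynomial-size $\logicminus$ formula with no epistemic operators, and by Theorem~\ref{thm:completeness} each model-checking query runs in PSPACE. Since we perform polynomially many such queries (indexed by $i$) and reuse space, and the state enumeration over $S_0^{\sim_i}$ can be done in polynomial space by generating states on the fly, the overall procedure stays in PSPACE, which is what we must establish.
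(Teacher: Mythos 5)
Your overall decomposition (iterate over agents, encode the profile with $\tau(\prof{Q})$ and the dynamics with $\tau(F^U_i)$, reduce the epistemic goal via Lemma~\ref{redprocedure}, and discharge each check as an $\mathsf{LTL}$ query in PSPACE) matches the paper. But there is a genuine gap in how you handle the existential quantification over deviations $\strat_i'$. You propose to ``leave $i$'s actions free'' in the history encoding and reduce the existence of a profitable deviation to a single satisfiability query. This over-approximates the set of available deviations: a strategy in this setting is a \emph{state-based} function that must moreover be uniform across $\sim_i$-indistinguishable states, whereas an unconstrained action sequence may prescribe different actions at two visits to the same state (or to two indistinguishable states) and hence correspond to no legal $\strat_i'$ at all. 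The paper explicitly flags this restriction as one of the two reasons the iterated-boolean-games algorithm cannot be imported wholesale, so the relaxation is not innocuous: a satisfiability witness could report a ``successful deviation'' that no admissible strategy realizes, and your algorithm would then wrongly reject a genuine Nash equilibrium. Your stated motivation for the shortcut --- that naive enumeration of the exponentially many $\strat_i'$ is ``not obviously in PSPACE'' --- rests on a misunderstanding: PSPACE tolerates exponentially many iterations as long as space is reused, and this is exactly what the paper does, enumerating each candidate $Q_i'$ in turn, writing down $\tau_i(Q_i')$, and checking the entailment $\tau(\prof{Q}_{-i},Q_i')\wedge\tau(F_i)\models_{\mathsf{LTL}}\redu(\gamma_i)$ before moving on.

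A second, smaller confusion: you treat $\prof{\strat}_{-i}$ as universally quantified and propose to leave the opponents' actions unconstrained as well. In $\textsc{MEMBERSHIP}$ the profile $\prof{Q}$ is part of the input, so the best-response test is against the \emph{fixed} $\prof{Q}_{-i}$; freeing the opponents' actions changes the problem (it would test something closer to a winning strategy). Keeping $\bigwedge_{j\neq i}\tau_j(Q_j)$ in the antecedent, as the paper does, is both necessary and sufficient. Your handling of the quantification over $S\in S_0^{\sim_i}$ by on-the-fly enumeration in polynomial space is fine and is, if anything, more explicit than the paper's own treatment of the prudential worst-case condition.
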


\begin{proof}
Let $\prof{Q}$ be a strategy profile for game $\infgame$. The following algorithm can be used to check whether $\prof{Q}$ is a Nash equilibrium. For all individuals $i\in \N$, we first check the following entailment:
\begin{align*}
\tau(\prof{Q}) \wedge \tau(F_i) \models_{ \mathsf{LTL}  } \redu (\gamma_i)
\end{align*}
in the language of $\mathsf{LTL}$
built out the set of atomic
propositions   $\{    \bel{i}{p} :  i \in \N  \text{ and } p \in \I      \}
\cup \{    \visatom{i}{p} :  i \in \N  \text{ and } p \in \I      \}$,
where $\redu(\gamma_i)$ is defined as in Algorithm
\ref{algo:redproc} in Section \ref{sec:reduction}.

If this is not the case, we consider all the possible strategies $Q'_i\not = Q_i$ for agent $i$ -- there are exponentially many, but each one can be specified in space polynomial in the size of the input -- and check the following entailment:
\begin{align*}
\tau(\prof{Q}_{-i}, Q_i) \wedge \tau(F_i) \models_{ \mathsf{LTL}  } \redu(\gamma_i)
\end{align*}

If the answer is positive we output NO, otherwise we proceed until all strategies and all individuals have been considered. The entailment for
$\mathsf{LTL}$
can be reduced to the problem
of checking validity in $\mathsf{LTL}$. Indeed, the following equivalence holds:
\begin{align*}
\psi \models_{ \mathsf{LTL}  } \varphi \text{ iff }   \models_{ \mathsf{LTL}  } \henceforth \psi \rightarrow \varphi
\end{align*}
Since the
problem
of checking validity in $\mathsf{LTL}$
 can be solved in PSPACE \cite{SistlaClarkeLTL}, we obtain the desired upper bound.
\end{proof}

We conjecture that the problem is also PSPACE-complete, as a reduction in line with the one by \cite{GutierrezEtAlIC20015} is likely to be obtained.

Observe that Theorem~\ref{thm:nash} can easily be generalised to all aggregation procedures that can be axiomatised by means of polynomially many $\logic$ formulas -- with the eventual use of Lemma~\ref{redprocedure} to translate $\logic$-formulas in $\mathsf{LTL}$.
This is not the case for all aggregation procedures: the majority rule -- i.e., the rule that updates the opinion of an individual to copy that of the majority of its influencers -- would for instance require an exponential number of formulas, one for each subset of influencers that forms a relative majority.
The study of the axiomatisation of aggregation procedures for opinion diffusion constitutes a promising direction
for future work.


\section{Conclusions and future work}\label{sec:conclusions}

In this paper we proposed a model, inspired from related work on iterated boolean games \cite{GutierrezEtAlIC20015},
that allows us to explore some basic aspects of strategic reasoning in social influence.
We grounded our model on related work \cite{SchwindEtAlAAAI2015,GrandiEtAlAAMAS2015},
 which modelled the process of social influence by means of aggregation procedures from either
 judgment aggregation or belief merging, and
we augmented it with the introduction of a simple logical language for the expression of
temporal and epistemic goals.
This allowed us to inquire into the multiple aspects of the relation between the structure
of the influence network, and the existence of well-known game-theoretic solution concepts.
Moreover, we were able to show that model checking for our language, as well
as the problem of checking whether a given profile is a Nash equilibrium, is in PSPACE,
hence no harder than the linear temporal logic on which our language is based.

There are multiple directions in which this work can be expanded.
First, the introduction of extra actions to add or sever trust links
may add an important dynamic aspect to the network structure.
Second, we may allow agents to lie about their private preferences, hence providing them
with more strategies to attain their goals.
Third, to develop our framework to its full generality we could introduce integrity constraints
among the issues at hand.
In all these cases, a deeper study of the interconnection between the network structure and the
strategies played by the agents of extreme interest, and has the potential to
unveil general insights about the problem of social influence.

\bibliographystyle{abbrv}
\bibliography{pop}

\end{document}